\documentclass[journal]{IEEEtran}

\usepackage{cite}
\usepackage{amsmath, amssymb, amsfonts, bm}
\usepackage{amsthm}
\usepackage{algorithmic}
\usepackage[ruled, linesnumbered]{algorithm2e}
\usepackage{setspace}
\usepackage{indentfirst}
\usepackage{graphicx}
\usepackage{epstopdf}
\usepackage{textcomp}
\usepackage{xcolor}
\usepackage{tikz}
\usetikzlibrary{arrows, shapes, chains}
\usepackage{subfigure}
\usepackage{mathrsfs}
\usepackage{verbatim}
\usepackage{url}
\usepackage{multirow}
\usepackage{multicol}
\usepackage{booktabs}
\usepackage{stfloats}
\usepackage{colortbl}
\usepackage{makecell}
\usepackage{tabularx}
\usepackage{mathtools}
\usepackage{float}

\usepackage{color}
\definecolor{colorRed}{RGB}{128, 0, 0}
\definecolor{colorGreen}{RGB}{0, 64, 0}
\definecolor{colorBlue}{RGB}{0, 0, 128}

\usepackage[colorlinks, linkcolor=colorRed, anchorcolor=blue, citecolor=colorBlue, urlcolor = colorGreen]{hyperref}

\newtheorem{assumption}{Assumption}
\newtheorem{theorem}{Theorem}
\newtheorem{remark}{Remark}
\newtheorem{definition}{Definition}

\allowdisplaybreaks[4]

\begin{document}

\title{Interference Management for Over-the-Air Federated Learning in Multi-Cell Wireless Networks}

\author{\IEEEauthorblockN{Zhibin Wang, \IEEEmembership{Graduate Student Member, IEEE}, Yong Zhou, \IEEEmembership{Member, IEEE}, \\
		Yuanming Shi, \IEEEmembership{Senior Member, IEEE}, and Weihua Zhuang, \IEEEmembership{Fellow, IEEE}}
	\thanks{
		Zhibin Wang is with the School of Information Science and Technology, ShanghaiTech University, Shanghai 201210, China, also with Shanghai Institute of Microsystem and Information Technology, Chinese Academy of Sciences, Shanghai 200050, China, and also with the University of Chinese Academy of Sciences, Beijing 100049, China (e-mail: wangzhb@shanghaitech.edu.cn).}
	\thanks{
		Yong Zhou and Yuanming Shi are with the School of Information Science and Technology, ShanghaiTech University, Shanghai 201210, China (e-mail: \{zhouyong, shiym\}@shanghaitech.edu.cn).}
	\thanks{
		Weihua Zhuang is with the Department of Electrical and Computer Engineering, University of Waterloo, Ontario N2L 3G1, Canada (e-mail: wzhuang@uwaterloo.ca).}
}

\maketitle

\begin{abstract}

Federated learning (FL) over resource-constrained wireless networks has recently attracted much attention. 
However, most existing studies consider one FL task in single-cell wireless networks and ignore the impact of downlink/uplink inter-cell interference on the learning performance. 
In this paper, we investigate FL over a multi-cell wireless network, where each cell performs a different FL task and over-the-air computation (AirComp) is adopted to enable fast uplink gradient aggregation. 
We conduct convergence analysis of AirComp-assisted FL systems, taking into account the inter-cell interference in both the downlink and uplink model/gradient transmissions, which reveals that the distorted model/gradient exchanges induce a gap to hinder the convergence of FL. 
We characterize the Pareto boundary of the error-induced gap region to quantify the learning performance trade-off among different FL tasks, based on which we formulate an optimization problem to minimize the sum of error-induced gaps in all cells. 
To tackle the coupling between the downlink and uplink transmissions as well as the coupling among multiple cells, we propose a cooperative multi-cell FL optimization framework to achieve efficient interference management for downlink and uplink transmission design. 
Results demonstrate that our proposed algorithm achieves much better average learning performance over multiple cells than non-cooperative baseline schemes.

\end{abstract}

\begin{IEEEkeywords}

Federated learning, over-the-air computation, interference management, multi-cell cooperation.

\end{IEEEkeywords}

\section{Introduction}	

The integration of artificial intelligence (AI) and wireless communications becomes an emerging trend for accelerating the fulfillment of connected intelligence in the 6G wireless networks \cite{letaief2019roadmap, yang2020artificial, shi2021mobile}.
Meanwhile, a vast amount of raw data generated at the wireless edge stimulates the emergence of various AI applications, such as smart Internet of Things (IoT), autonomous driving, and augmented reality.
Due to the communication-efficiency and privacy-preserving concerns, it is undesirable to transfer raw data from massive edge devices to an edge server, e.g., base station (BS), as in the centralized machine learning (ML) framework.
To this end, federated learning (FL) as one of the promising distributed learning paradigms has recently been proposed to train a common ML model in an iterative manner by exploiting the edge computing power while keeping the raw data at the edge devices \cite{konevcny2016federated, mcmahan2017communication}.
During each training iteration, multiple edge devices first receive the latest global model broadcast from the edge server, based on which the edge devices compute the local updates with their own datasets. 
Subsequently, the edge server aggregates the local updates to refresh the global model for the next training iteration.
The aggregation of local updates therein can be regarded as an implicit information sharing without transmitting the private raw data, which effectively preserves the data privacy.
However, because of the periodic exchange of high-dimensional model/gradient parameters between the edge server and edge devices during the training process, it is mandatory to develop communication-efficient strategies to reduce the communication overhead and accelerate the training process, especially when the radio resources are limited.

Recent years have witnessed an upsurge research interest in the joint learning and communication design for FL over wireless networks.
One challenge in realizing such a joint design is the efficient allocation of limited radio resources for high learning performance.
To this end, the authors in \cite{wang2019adaptive} propose to dynamically adapt the frequency of global aggregation to improve the learning performance under a fixed resource budget for the entire training process.
The authors in \cite{dinh2021federated} optimize the resource allocation to balance the trade-off between the training time and the energy consumption.
In addition, various scheduling policies are developed in \cite{yang2020scheduling, kang2020reliable, amiri2021convergence, ren2020scheduling, chen2021joint, shi2021joint, wei2022lowlatency, xu2021client} to allocate limited radio bandwidth to a subset of edge devices with good channel conditions \cite{yang2020scheduling, amiri2021convergence}, large local updates norm \cite{amiri2021convergence, ren2020scheduling}, small transmission delay \cite{chen2021joint, shi2021joint, wei2022lowlatency}, and low energy consumption \cite{xu2021client}.
Moreover, hierarchical FL is considered in \cite{abad2020hierarchical, luo2020hfel, lim2021dynamic, lim2022decentralized} to reduce communication costs and achieve efficient resource allocation.
All the aforementioned studies adopt an orthogonal multiple access (OMA) scheme to achieve uplink model/gradient aggregation, where the edge sever updates the global model after successfully decoding each of the local updates.
The number of resource blocks required by such a ``\textit{transmit-then-aggregate}'' strategy linearly scales with the number of participating devices, which can be radio spectrum inefficient, especially when the number of devices is large.

Over-the-air computation (AirComp) has recently been leveraged to enhance the communication efficiency of uplink model/gradient aggregation in FL systems \cite{yang2020federated, zhu2020broadband, amiri2020machine, fang2022communicationefficient}.
By exploiting the waveform superposition property of multiple-access channels, AirComp allows multiple devices to concurrently transmit their local updates over the same radio channel and meanwhile enables the edge server to directly receive an aggregation of these local updates \cite{zhu2021aircomp, wang2021wirelesspowered, fang2021overtheair}.
Such an ``\textit{aggregate-when-transmit}'' strategy requires only one resource block regardless of the number of participating devices.
To achieve accurate model aggregation via AirComp, various transceiver designs have recently been proposed to reduce the distortion of the aggregated model due to the receiver noise and channel fading, such as via multi-antenna beamforming \cite{yang2020federated} and transmit power control \cite{zhang2021gradient, cao2021optimized}.
In addition, auxiliary equipment is leveraged to support reliable uplink model/gradient aggregation, such as intelligent reflecting surface \cite{yang2020fedirs, liu2021reconfigurable, wang2022federated} and half-duplex relay \cite{lin2022relayassisted}, thereby improving the learning performance when the propagation environment is unfavorable.
All the aforementioned works assume ideal error-free downlink transmissions and focus on the uplink transmission design.
Due to the random channel fading, the global model disseminated in the downlink inevitably suffers from channel distortion, which in turn detrimentally affects the local training performance at the edge devices.
When the downlink distortion is severe, the previous training may become futile.

To fill this gap, the authors in \cite{amiri2021noisydl, wei2022federated, wang2022edge} develop efficient methods to tackle non-ideal downlink model dissemination in wireless FL systems.
In particular, the authors in \cite{amiri2021noisydl} show that, by adopting the analog downlink transmission, the adverse impact on the learning performance caused by the devices with less accurate model estimates can be alleviated by those with more accurate model estimates.
The convergence of FL over noisy downlink and uplink wireless channels is investigated in \cite{wei2022federated}, which takes into account full and partial device participation, uplink model and model differential transmission, and non-independent and identically distributed (non-i.i.d.) local datasets.
Results show that the effect of both downlink and uplink communication errors can be alleviated during the training process as long as they do not dominate the errors caused by stochastic gradient descent (SGD).
In addition, the authors in \cite{wang2022edge} consider a unit-modulus AirComp-assisted FL system, where a phase-shift network at the edge server is utilized to reduce the parameter distortion caused by both the downlink and uplink communication errors.

\subsection{Motivations}

The aforementioned studies develop efficient joint learning and communication designs considering one FL task in a single-cell wireless network.
With the rapid advancement in connected intelligence, the co-existence of multiple FL tasks in a multi-cell wireless network will become a new normal in future wireless networks.
Under this circumstance, the inter-cell interference among different FL tasks needs to be considered, as it is the main performance-limiting factor of multi-cell wireless networks.
A few recent studies have given some preliminary discussions about the impact of inter-cell interference on device scheduling and resource allocation in wireless FL systems \cite{salehi2021federated, lin2021deploying}.
These studies focus on inter-cell interference in the uplink aggregation.
Different from the existing studies in the literature, in this paper we study multiple FL tasks over multi-cell wireless networks, while taking into account the inter-cell interference in both the downlink and uplink model/gradient transmissions.
Furthermore, we investigate the learning performance in multiple cells, extending the existing studies focusing only on a typical cell.

\subsection{Challenges and Contributions}

In this paper, we consider over-the-air FL in a multi-cell wireless network, where each cell performs a different FL task and the devices in each cell upload local gradients to their home BS using AirComp.
We consider a practical yet challenging scenario with universal frequency reuse, where both the downlink and uplink transmissions during the training process in each cell are distorted by the receiver noise, channel fading, and inter-cell interference.
To achieve maximal learning performance for all cells, it is necessary to effectively manage inter-cell interference during both the downlink and uplink model/gradient transmissions, which introduces the following challenges.
First, the main objective of FL is to enable low-latency and high-quality intelligence distillation from edge devices. However, there does not exist a widely-accepted performance metric that can characterize the learning performance as a function of communication system parameters.
Second, considering both the downlink and uplink model/gradient distortion significantly complicates the learning performance characterization.
This is because the local gradient computation depends on the distortion of the downlink model transmission, while the global model update also depends on the distortion of the uplink gradient transmission.
Third, due to the inter-cell interference, the transmission processes in different cells are coupled and hence simply focusing on single-cell optimization may deteriorate the learning performance in other cells, which makes it challenging to coordinate the downlink and uplink model/gradient transmissions in different cells.
To address these challenges, we develop a cooperative optimization framework to balance the learning performance of different FL tasks in a multi-cell wireless network.
The main contributions of this paper are summarized as follows:
\begin{itemize}
	\item
	We first analyze the convergence of the proposed over-the-air FL in a multi-cell wireless network, taking into account the impact of both the downlink and uplink transmission distortions caused by the receiver noise, channel fading, and inter-cell interference.
	The derived analytical convergence expression shows that both the downlink and uplink transmission distortions induce a gap that prevents the FL algorithm from converging to a stationary point;
	
	\item
	To balance the learning performance among different FL tasks in multiple cells, we define a new performance metric, named gap region, to be the set of error-induced gaps that can be simultaneously generated by all cells under resource constraints.
	We further introduce the Pareto boundary of the gap region to characterize the performance trade-off among multiple cells, which enables us to formulate an optimization problem to minimize the sum of error-induced gaps for all cells via the profiling technique;
	
	\item
	We propose a cooperative multi-cell FL optimization framework, which decouples the optimization for the downlink and uplink model/gradient transmissions, and coordinates the optimization for different cells.
	It enables us to separately minimize the sum of gaps generated by the downlink and uplink transmission distortions, where the resulting subproblems can be solved by using the bisection search method via checking the feasibility of a sequence of second order cone programming (SOCP) problems.
	Practical implementation issues of our proposed optimization framework are discussed;
	
	\item
	We present extensive simulations to validate our proposed cooperative multi-cell FL optimization framework.
	Simulation results show that our proposed algorithm can balance the learning performance among different FL tasks in multiple cells, and yield lower training loss and higher test accuracy as compared with non-cooperative baseline schemes in terms of the multi-cell average learning performance.
\end{itemize}

\subsection{Organization and Notations}

The rest of this paper is organized as follows.
Section \ref{Sec:SysModel} describes the learning and communication models for over-the-air FL in a multi-cell wireless network.
Section \ref{Sec:CvgPrb} presents the convergence analysis and formulates the cooperative optimization problem.
In Section \ref{Sec:SysOpt}, we propose a cooperative multi-cell FL optimization framework to achieve efficient interference management.
Simulation results are provided in Section \ref{Sec:Sim} to evaluate the performance of our proposed algorithm.
Finally, Section \ref{Sec:Conc} concludes this work.

\textit{Notations:}
Italic, bold lower-case, and bold upper-case letters denote scalar, column vector, and matrix, respectively.
Operators $(\cdot)^\dagger$, $(\cdot)^{\sf T}$, and ${\rm diag}(\cdot)$ denote the conjugate, transpose, and diagonal matrix, respectively.
Besides, operator $|\cdot|$ returns the cardinality of a set or the absolute value of a scalar, and $\|\cdot\|$ represents the Euclidean norm, while $\Re\{\cdot\}$ denotes the real part of a complex value.
The important notations used throughout the paper are listed in Table \ref{tab:notation}.

\begin{table}[t]
	\centering
	\caption{Important notations and their definitions.}
	\label{tab:notation}
	\footnotesize
	\renewcommand{\arraystretch}{1.3}
	\begin{tabular}{!{\vrule width1pt}l!{\vrule width1pt}p{6.3cm}!{\vrule width1pt}}
		\Xhline{1pt}
		\rowcolor[HTML]{EFEFEF}
		\textbf{Notation} & \textbf{Definition} \\
		\Xhline{1pt}
		$\mathcal{M}$ & Set of all cells/BSs. \\ \hline
		$\mathcal{K}_m$ & Set of all devices associated with BS $m \in \mathcal{M}$. \\ \hline
		$\mathcal{D}_k$ & Local dataset at device $k \in \mathcal{K}_m$. \\ \hline
		$\bm{w}_m$ & Model parameter of the learning task in cell $m \in \mathcal{M}$. \\ \hline
		$f_m(\bm{w}_m; \bm{\xi}_i)$ & Sample-wise loss function of the learning task in cell $m$ evaluated at model parameter $\bm{w}_m$ on sample $\bm{\xi}_i$. \\ \hline
		$F_{m, k}(\bm{w}_m)$ & Local loss function of device $k \in \mathcal{K}_m$ evaluated at model parameter $\bm{w}_m$ on dataset $\mathcal{D}_k$. \\ \hline
		$F_m(\bm{w}_m)$ & Global loss function evaluated at model parameter $\bm{w}_m$ for the learning task in cell $m$. \\ \hline
		$T$ & Number of communication rounds. \\ \hline
		$\bm{g}_k$ & Local gradient parameter at device $k \in \mathcal{K}_m$. \\ \hline
		$\bm{s}_m^{\rm dl}$ / $\bm{s}_k^{\rm ul}$ & Normalized \textit{model parameter of $\bm{w}_m$} / \textit{gradient parameter $\bm{g}_k$}. \\ \hline
		$h_k^{\rm dl}$ / $h_k^{\rm ul}$ & \textit{Downlink} / \textit{Uplink} channel coefficient between device $k \in \mathcal{K}_m$ and its home BS $m \in \mathcal{M}$. \\ \hline
		$h_{l, k}^{\rm dl}$ / $h_{k, l}^{\rm ul}$ & \textit{Downlink} / \textit{Uplink} channel coefficient between device $k \in \mathcal{K}_m$ and its non-associated BS $l \in \mathcal{M} \setminus \{m\}$. \\ \hline
		$y_{k, d}^{\rm dl}$ / $y_{m, d}^{\rm ul}$ & \textit{Downlink} / \textit{Uplink} signal of the $d$-th dimension received at \textit{device $k \in \mathcal{K}_m$} / \textit{BS $m \in \mathcal{M}$}. \\ \hline
		$p_m^{\rm dl}$ / $p_k^{\rm ul}$ & \textit{Downlink} / \textit{Uplink} transmit power at \textit{BS $m \in \mathcal{M}$} / \textit{device $k \in \mathcal{K}_m$}. \\ \hline
		$z_{k, d}^{\rm dl}$ / $z_{m, d}^{\rm ul}$ & Additive receiver noise of the $d$-th dimension at \textit{device $k \in \mathcal{K}_m$} / \textit{BS $m \in \mathcal{M}$}. \\ \hline
		$\bm{r}_k^{\rm dl}$ / $\bm{r}_m^{\rm ul}$ & De-normalized signal at \textit{device $k \in \mathcal{K}_m$} / \textit{BS $m \in \mathcal{M}$}. \\ \hline
		$\bm{e}_k^{\rm dl}$ / $\bm{e}_m^{\rm ul}$ & \textit{Downlink dissemination} / \textit{Uplink aggregation} error at \textit{device $k \in \mathcal{K}_m$} / \textit{BS $m \in \mathcal{M}$}. \\ \hline
		$\widehat{\bm{w}}_k$ & Model parameter received at device $k \in \mathcal{K}_m$. \\ \hline
		$\widehat{\bm{g}}_m$ & Average of the local gradients received at BS $m \in \mathcal{M}$. \\ \hline
		$c_m$ & Receive normalizing factor at BS $m$. \\ \hline
		$\eta_m$ & Learning rate of the learning task in cell $m \in \mathcal{M}$. \\ 
		\Xhline{1pt}
	\end{tabular}
\end{table}

\section{System Model} \label{Sec:SysModel}

In this section, we present the learning and communication models for over-the-air FL in a multi-cell wireless network, taking into account the inter-cell interference in both the downlink model dissemination and uplink gradient aggregation.

\subsection{Learning Model}

Consider a multi-cell wireless network consisting of $M$ single-antenna BSs, where BS $m \in \mathcal{M} = \{1, 2, \dots, M\}$ aims to train an ML model by coordinating $K_m$ single-antenna devices located in cell $m$, as shown in Fig. \ref{fig:system}.
Specifically, device $k \in \mathcal{K}_m = \{\sum_{l = 1}^{m - 1} K_l + 1, \sum_{l = 1}^{m - 1} K_l + 2, \dots, \sum_{l = 1}^{m - 1} K_l + K_m\}$ is associated with BS $m$ and has local dataset $\mathcal{D}_k$, where $\mathcal{K}_m \cap \mathcal{K}_l = \emptyset$, $\forall \, l \in \mathcal{M} \setminus \{m\}$.
The local loss function of device $k \in \mathcal{K}_m$ evaluated at model parameter $\bm{w}_m \in \mathbb{R}^{D_m}$ is given by
\begin{align} \label{eq:local_loss}
	F_{m, k}(\bm{w}_m) = \frac{1}{|\mathcal{D}_k|} \sum_{\bm{\xi}_i \in \mathcal{D}_k} f_m(\bm{w}_m; \bm{\xi}_i)
\end{align}
where $\bm{\xi}_i$ denotes the $i$-th labeled data sample, and $f_m(\bm{w}_m; \bm{\xi}_i)$ is the sample-wise loss function quantifying the prediction error of model parameter $\bm{w}_m$ on data sample $\bm{\xi}_i$.
Without loss of generality, all the local datasets in the same cell are assumed to have the same size, i.e., $|\mathcal{D}_k| = |\mathcal{D}_j|$, $\forall \, k, j \in \mathcal{K}_m$.
Accordingly, the global loss function for the learning task in cell $m$ can be expressed as
\begin{align} \label{eq:global_loss}
	F_m(\bm{w}_m) = \frac{1}{K_m} \sum_{k \in \mathcal{K}_m} F_{m, k}(\bm{w}_m)
\end{align}
and the objective of FL is to obtain a global model such that
\begin{align} \label{eq:goal}
	\bm{w}_m^\star = {\rm arg} \underset{\bm{w}_m \in \mathbb{R}^{D_m}}{\rm min} \, F_m(\bm{w}_m).
\end{align}

\begin{figure}[t]
	\centering
	\includegraphics[scale=0.31]{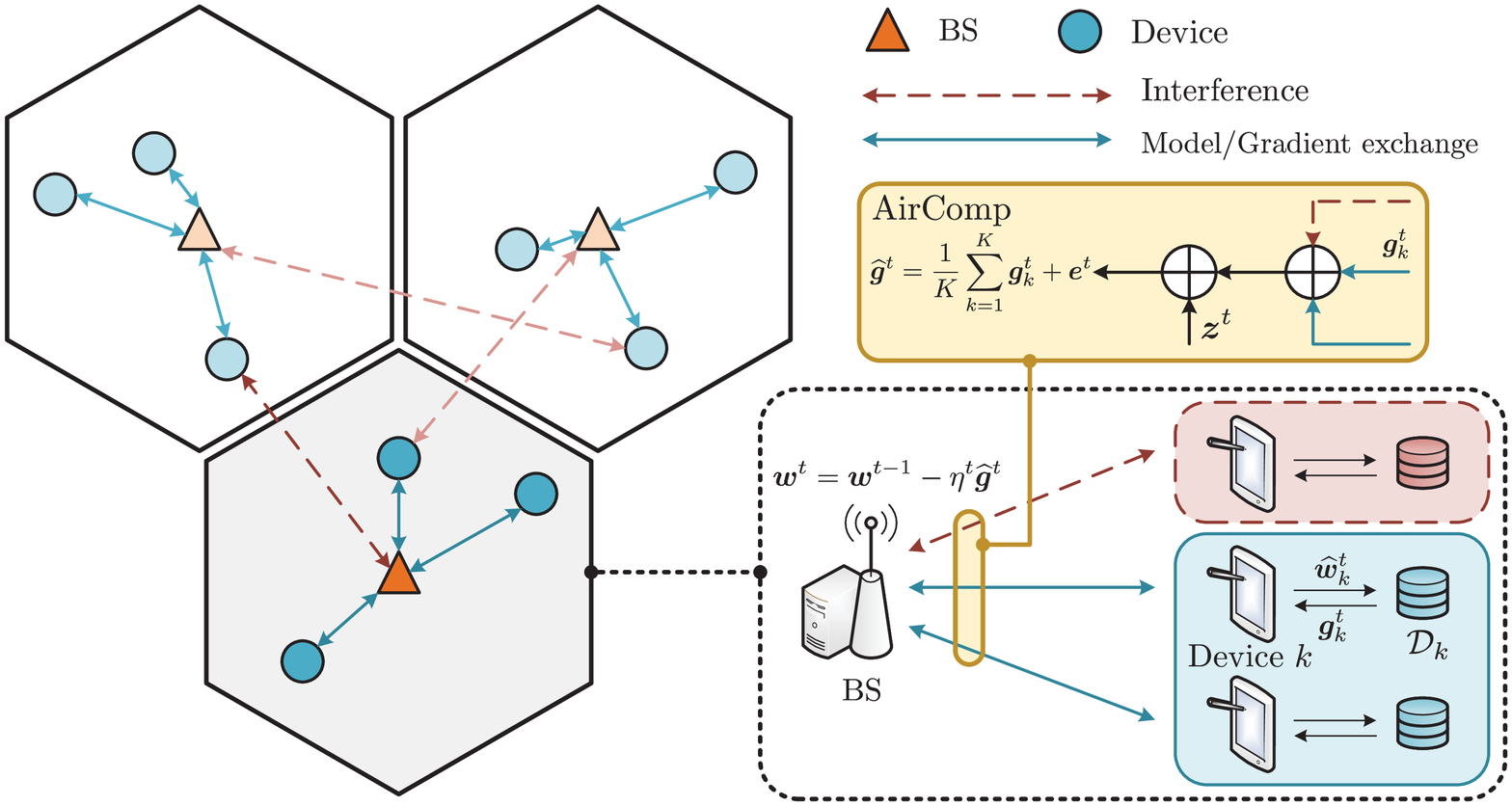}
	\caption{Illustration of multiple FL tasks in a multi-cell wireless network.} 
	\label{fig:system}
\end{figure}

In our considered multi-cell system, each cell performs a different FL task by applying the gradient-averaging method \cite{konevcny2016federated}, which will be elaborated in the next subsection.
We consider universal frequency reuse, i.e., all cells share the same frequency channel, which inevitably leads to inter-cell interference among the different cells.

\subsection{Communication Model}

To train a high-quality global model, each cell iteratively performs the training process for $T \in \mathbb{N}_+$ rounds. 
Each training round consists of four stages: 1) \textit{downlink model dissemination}, 2) \textit{local gradient computation}, 3) \textit{uplink gradient aggregation}, and 4) \textit{global model update}, where stages 1) and 3) are carried out over wireless fading channels.
For simplicity, we assume that the sample-wise loss functions and the dimensions of model parameters in different cells are the same, i.e., $f_m(\cdot) = f(\cdot)$ and $D_m = D$, $\forall \, m \in \mathcal{M}$.
Note that for a general case that the dimensions of model parameters in different cells are different, we can use either zero padding or compression technique to adjust the model parameters of different cells to have the same size.
The details of the learning and communication design for the $t$-th round, $t \in \mathcal{T} = \{1, 2, \dots, T\}$, are given in the following.

\subsubsection{Downlink Model Dissemination}

In the downlink, BS $m$ broadcasts its current global model $\bm{w}_m^{t - 1}$ to the associated devices in set $\mathcal{K}_m$.
Specifically, at the $t$-th round, to facilitate the transmit power control, BS $m$ first normalizes the current global model $\bm{w}_m^{t - 1}$ as
\begin{align}
	(\bm{s}_m^{\rm dl})^t = \frac{\bm{w}_m^{t - 1} - \bar{w}_m^{t - 1} \bm{1}}{\nu_m^{t - 1}}
\end{align}
where $\bm{1} = [1, 1, \dots, 1]^{\sf T} \in \mathbb{R}^D$, and $\bar{w}_m^{t - 1} \in \mathbb{R}$ and $\nu_m^{t - 1} \in \mathbb{R}_{++}$ denote the mean and standard deviation of $D$ entries of global model $\bm{w}_m^{t - 1}$, respectively.
They are defined as
\begin{align}
	\bar{w}_m^{t - 1} = \frac{1}{D} \sum_{d = 1}^{D} w_{m, d}^{t - 1}, \, (\nu_m^{t - 1})^2 = \frac{1}{D} \sum_{d = 1}^{D} \left(w_{m, d}^{t - 1} - \bar{w}_m^{t - 1}\right)^2.
\end{align}
Hence, each element of normalized model $(\bm{s}_m^{\rm dl})^t$ has zero mean and unit variance, i.e., $\mathbb{E}[(s_{m, d}^{\rm dl})^t] = 0$ and $\mathbb{E}[((s_{m, d}^{\rm dl})^t)^2] = 1$, $\forall \, d \in \{1, 2, \dots, D\}$.
For independent FL tasks in different cells, we have $\mathbb{E}[(\bm{s}_m^{\rm dl})^t ((\bm{s}_l^{\rm dl})^t)^{\sf T}] = \bm{0}$, $\forall \, m \ne l$.

Let $(h_k^{\rm dl})^t \in \mathbb{C}$ and $(h_{l, k}^{\rm dl})^t \in \mathbb{C}$, $\forall \, k \in \mathcal{K}_m$, denote the downlink channel coefficients during the $t$-th round between device $k$ and its home BS $m$, and between device $k$ and its non-associated BS $l \in \mathcal{M} \setminus \{m\}$, respectively.
Assume that each device is able to accurately estimate the channels between itself and all the BSs through downlink pilot signaling, and each BS can obtain the local channel state information (CSI) of its associated devices through uplink feedback \cite{cao2021cooperative}.
The channel gain of each link is invariant within one coherence time block but independently changes from one block to another.
Each round consists of one downlink transmission block and one uplink transmission block, where each transmission block can accommodate the transmission of an entire model/gradient parameter.
The communication protocol designed in the following focuses on the transmission of one typical dimension of the parameters, which can be extended to the scenarios where multiple blocks are required for transmitting high-dimensional model/gradient parameters.
Note that, when the dimension of the model/gradient parameter is quite large, the compression techniques can be leveraged to enable the parameters to be transmitted in one coherence block \cite{amiri2020machine, amiri2020federated}.

BS $m$ broadcasts the normalized global model, $(\bm{s}_m^{\rm dl})^t$, to the associated devices in set $\mathcal{K}_m$.
The signal received at device $k \in \mathcal{K}_m$ is given by
\begin{align}
	(y_{k, d}^{\rm dl})^t & = (h_k^{\rm dl})^t \sqrt{(p_m^{\rm dl})^t} (s_{m, d}^{\rm dl})^t \notag \\
	& \quad + \underset{\text{Inter-cell interference}}{\underbrace{\sum_{l \in \mathcal{M} \setminus \{m\}} (h_{l, k}^{\rm dl})^t \sqrt{(p_l^{\rm dl})^t} (s_{l, d}^{\rm dl})^t}} + (z_{k, d}^{\rm dl})^t
\end{align}
where $(p_m^{\rm dl})^t \in \mathbb{R}_+$ denotes the downlink transmit power at BS $m$ and $(z_{k, d}^{\rm dl})^t \sim \mathcal{CN}(0, (\sigma_k^{\rm dl})^2)$ denotes the additive receiver Gaussian noise at device $k$.
By assuming that BS $m$ shares small-sized scalars $\{\bar{w}_m^{t - 1}, \nu_m^{t - 1}\}$ and $(p_m^{\rm dl})^t$ with its associated devices in an error-free manner before broadcasting the global model, device $k$ is able to de-normalize the received signal as follows
\begin{align} \label{eq:r_dl_k_d}
	(r_{k, d}^{\rm dl})^t & = \frac{((h_k^{\rm dl})^t)^\dagger \nu_m^{t - 1}}{|(h_k^{\rm dl})^t|^2 \sqrt{(p_m^{\rm dl})^t}} (y_{k, d}^{\rm dl})^t + \bar{w}_m^{t - 1} \notag \\
	& = w_{m, d}^{t - 1} + \underbrace{\frac{((h_k^{\rm dl})^t)^\dagger \nu_m^{t - 1}}{|(h_k^{\rm dl})^t|^2 \sqrt{(p_m^{\rm dl})^t}}} \notag \\
	& \quad \underset{(e_{k, d}^{\rm dl})^t}{\underbrace{\times \left(\sum_{l \in \mathcal{M} \setminus \{m\}} (h_{l, k}^{\rm dl})^t \sqrt{(p_l^{\rm dl})^t} (s_{l, d}^{\rm dl})^t + (z_{k, d}^{\rm dl})^t\right)}}
\end{align}
where $(e_{k, d}^{\rm dl})^t$ denotes the downlink dissemination error caused by the receiver noise, channel fading, and inter-cell interference.
Here, the first term in parentheses of $(e_{k, d}^{\rm dl})^t$ denotes the error caused by inter-cell interference and the second term is the error induced by the additive noise.
Therefore, the model parameter received at device $k$ can be represented as
\begin{align} \label{eq:dl_w}
	\widehat{\bm{w}}_k^t = \Re\left\{(\bm{r}_k^{\rm dl})^t\right\} = \bm{w}_m^{t - 1} + \Re\left\{(\bm{e}_{k}^{\rm dl})^t\right\}
\end{align}
where $(\bm{r}_k^{\rm dl})^t = [(r_{k, 1}^{\rm dl})^t, (r_{k, 2}^{\rm dl})^t, \dots, (r_{k, D}^{\rm dl})^t]^{\sf T}$ and $(\bm{e}_k^{\rm dl})^t = [(e_{k, 1}^{\rm dl})^t, (e_{k, 2}^{\rm dl})^t, \dots, (e_{k, D}^{\rm dl})^t]^{\sf T}$.

\subsubsection{Local Gradient Computation}

After obtaining estimated global model $\widehat{\bm{w}}_k^t$, device $k$ computes local gradient $\bm{g}_k^t \in \mathbb{R}^{D}$ based on its local dataset $\mathcal{D}_k$, given by
\begin{align} \label{eq:local_grad}
	\bm{g}_k^t = \nabla F_k(\widehat{\bm{w}}_k^t) = \frac{1}{|\mathcal{D}_k|} \sum_{\bm{\xi}_i \in \mathcal{D}_k} \nabla f(\widehat{\bm{w}}_k^t; \bm{\xi}_i).
\end{align}
Note that the mini-batch stochastic gradient can be adopted when the size of the local dataset is large.
By uniformly sampling a batch of data $(\mathcal{D}_k^{\rm bs})^t \subset \mathcal{D}_k$, the mini-batch stochastic gradient can be computed by
\begin{align}
	\widetilde{\bm{g}}_k^t = \frac{1}{|(\mathcal{D}_k^{\rm bs})^t|} \sum_{\bm{\xi}_i \in (\mathcal{D}_k^{\rm bs})^t} \nabla f(\widehat{\bm{w}}_k^t; \bm{\xi}_i)
\end{align}
where $\mathbb{E}_{(\mathcal{D}_k^{\rm bs})^t}\left[\widetilde{\bm{g}}_k^t\right] = \bm{g}_k^t$.

In the following, we mainly adopt \eqref{eq:local_grad} for the convergence analysis and device $k \in \mathcal{K}_m$ uploads its local gradient parameter $\bm{g}_k^t$ to its home BS $m$ in the uplink transmission.

\subsubsection{Uplink Gradient Aggregation}

For global model update, BS $m$ aims to receive an arithmetic mean of the local gradients of all devices in set $\mathcal{K}_m$, given by
\begin{align} \label{eq:target}
	\bm{g}_m^t = \frac{1}{K_m} \sum_{k \in \mathcal{K}_m} \bm{g}_k^t.
\end{align}
To enhance the communication efficiency, we adopt AirComp for uplink gradient aggregation.
With AirComp, BS $m$ is capable of directly obtaining a noisy version of the arithmetic mean in \eqref{eq:target} by allowing all devices in set $\mathcal{K}_m$ to concurrently transmit their local gradients over the same radio channel \cite{yang2020federated, zhu2020broadband, amiri2020machine}.
Similar to the downlink transmission, device $k$ first normalizes its local gradient as \cite{liu2021reconfigurable}
\begin{align} \label{eq:normalize_grad}
	(\bm{s}_k^{\rm ul})^t = \frac{\bm{g}_k^t - \bar{g}_k^t \bm{1}}{\upsilon_k^t}
\end{align}
where $\bar{g}_k^t \in \mathbb{R}$ and $\upsilon_k^t \in \mathbb{R}_{++}$ denote the mean and standard deviation of $D$ entries of local gradient $\bm{g}_k^t$, respectively.
They are defined as
\begin{align}
	\bar{g}_k^t = \frac{1}{D} \sum_{d = 1}^{D} g_{k, d}^t, \, (\upsilon_k^t)^2 = \frac{1}{D} \sum_{d = 1}^{D} \left(g_{k, d}^t - \bar{g}_k^t\right)^2.
\end{align}
Consequently, \eqref{eq:normalize_grad} ensures that the transmitted signals satisfy $\mathbb{E}[(s_{k, d}^{\rm ul})^t] = 0$ and $\mathbb{E}[((s_{k, d}^{\rm ul})^t)^2] = 1$, $\forall \, d \in \{1, 2, \dots, D\}$, $\forall \, k \in \mathcal{K}_m$.
By assuming independent local gradients among different devices, we have $\mathbb{E}[(\bm{s}_k^{\rm ul})^t ((\bm{s}_{k'}^{\rm ul})^t)^{\sf T}] = \bm{0}$, $\forall \, k \ne k'$.

Let $(h_k^{\rm ul})^t \in \mathbb{C}$ and $(h_{k, l}^{\rm ul})^t \in \mathbb{C}$, $\forall \, k \in \mathcal{K}_m$, denote the uplink channel coefficients during the $t$-th round between device $k$ and its home BS $m$, and between device $k$ and its non-associated BS $l \in \mathcal{M} \setminus \{m\}$, respectively.
With concurrent transmissions in the uplink, the signal received at BS $m$ is given by
\begin{align} \label{eq:y_ul}
	(y_{m, d}^{\rm ul})^t & = \sum_{k \in \mathcal{K}_m} (h_k^{\rm ul})^t b_k^t (s_{k, d}^{\rm ul})^t + (z_{m, d}^{\rm ul})^t \notag \\
	& \quad + \underset{\text{Inter-cell interference}}{\underbrace{\sum_{l \in \mathcal{M} \setminus \{m\}} \sum_{k' \in \mathcal{K}_l} (h_{k', m}^{\rm ul})^t b_{k'}^t (s_{k', d}^{\rm ul})^t}}
\end{align}
where $b_k^t \in \mathbb{C}$ denotes the transmit scalar at device $k$ and $(z_{m, d}^{\rm ul})^t \sim \mathcal{CN}(0, (\sigma_m^{\rm ul})^2)$ denotes the additive receiver Gaussian noise at BS $m$.
To compensate for the phase distortion introduced by complex channel responses, the transmit scalar is set to $b_k^t = \frac{((h_k^{\rm ul})^t)^\dag}{|(h_k^{\rm ul})^t|} \sqrt{(p_k^{\rm ul})^t}$, where $(p_k^{\rm ul})^t \in \mathbb{R}_+$ denotes the uplink transmit power at device $k$.
Then, \eqref{eq:y_ul} reduces to
\begin{align}
	(y^{\rm ul}_{m, d})^t & = \sum_{k \in \mathcal{K}_m} |(h_k^{\rm ul})^t| \sqrt{(p_k^{\rm ul})^t} (s_{k, d}^{\rm ul})^t + (z_{m, d}^{\rm ul})^t \notag \\*
	& \quad + \sum_{l \in \mathcal{M} \setminus \{m\}} \sum_{k' \in \mathcal{K}_l} \frac{(h_{k', m}^{\rm ul})^t ((h_{k'}^{\rm ul})^t)^\dag}{|(h_{k'}^{\rm ul})^t|} \sqrt{(p_{k'}^{\rm ul})^t} (s_{k', d}^{\rm ul})^t.
\end{align}
By assuming that the devices report scalars $\{\bar{g}_k^t, \upsilon_k^t\}_{k \in \mathcal{K}_m}$ to their home BSs in an error-free manner, BS $m$ is able to obtain the scaled signal as follows
\begin{align} \label{eq:r_ul_m_d}
	& (r_{m, d}^{\rm ul})^t = \frac{1}{K_m} \left(\frac{(y_{m, d}^{\rm ul})^t}{\sqrt{c_m^t}} + \sum_{k \in \mathcal{K}_m} \bar{g}_k^t\right) \notag \\
	& = \frac{1}{K_m} \left(\sum_{k \in \mathcal{K}_m} g_{k, d}^t + \frac{(y_{m, d}^{\rm ul})^t}{\sqrt{c_m^t}} - \sum_{k \in \mathcal{K}_m} \left(g_{k, d}^t - \bar{g}_k^t\right)\right) \notag \\
	& = \frac{1}{K_m} \sum_{k \in \mathcal{K}_m} g_{k, d}^t + \frac{1}{K_m} \notag \\
	& \quad \times \underbrace{\left(\sum_{k \in \mathcal{K}_m} \left(\frac{|(h_k^{\rm ul})^t| \sqrt{(p_k^{\rm ul})^t}}{\sqrt{c_m^t}} - \upsilon_k^t \right) (s_{k, d}^{\rm ul})^t + \frac{(z_{m, d}^{\rm ul})^t}{\sqrt{c_m^t}}\right.} \notag \\
	& \quad \underset{(e_{m, d}^{\rm ul})^t}{\underbrace{\left. + \sum_{l \in \mathcal{M} \setminus \{m\}} \sum_{k' \in \mathcal{K}_l} \frac{(h_{k', m}^{\rm ul})^t ((h_{k'}^{\rm ul})^t)^\dagger \sqrt{(p_{k'}^{\rm ul})^t} (s_{k', d}^{\rm ul})^t}{|(h_{k'}^{\rm ul})^t| \sqrt{c_m^t}}\right)}}
\end{align}
where $c_m^t \in \mathbb{R}_{++}$ denotes the receive normalizing factor for signal power alignment and noise power suppression, and $(e_{m, d}^{\rm ul})^t$ denotes the uplink aggregation error caused by the receiver noise, channel fading, and inter-cell interference.
Here, the first term of $(e_{m, d}^{\rm ul})^t$ represents the misalignment error caused by the non-uniform channel fading and power control, the second term denotes the error introduced by the receiver noise, and the third term denotes the error due to inter-cell interference.
Therefore, the average of the local gradients received at BS $m$ is given by
\begin{align} \label{eq:ul_g}
	\widehat{\bm{g}}_m^t = \Re\left\{(\bm{r}_m^{\rm ul})^t\right\} = \frac{1}{K_m} \left(\sum_{k \in \mathcal{K}_m} \bm{g}_k^t + \Re\left\{(\bm{e}_m^{\rm ul})^t\right\}\right)
\end{align}
where $(\bm{r}_m^{\rm ul})^t = [(r_{m, 1}^{\rm ul})^t, (r_{m, 2}^{\rm ul})^t, \dots, (r_{m, D}^{\rm ul})^t]^{\sf T}$ and $(\bm{e}_m^{\rm ul})^t = [(e_{m, 1}^{\rm ul})^t, (e_{m, 2}^{\rm ul})^t, \dots, (e_{m, D}^{\rm ul})^t]^{\sf T}$.

\subsubsection{Global Model Update}

Based on the average of local gradients received at BS $m$ given in \eqref{eq:ul_g}, the global model maintained by BS $m$ can be updated as
\begin{align} \label{eq:global_update}
	\bm{w}^t_m = \bm{w}_m^{t - 1} - \eta_m^t \widehat{\bm{g}}_m^t
\end{align}
where $\eta_m^t$ denotes the learning rate of cell $m$.

\begin{remark}
	In the AirComp-based system, synchronization is required among the distributed devices.
	In practice, synchronization can be realized by sharing a reference-clock across the devices \cite{abari2015airshare}, or adopting the timing advance technique commonly used in 4G Long-Term Evolution (LTE) and 5G New Radio (NR) \cite{mahmood2019time}.
	For the multi-cell system under consideration, we assume that all devices can be synchronized as in \cite{cao2021cooperative}.
\end{remark}

\begin{remark}
		Normalizing the model/gradient parameters before transmission has two benefits.
		First, by normalizing the parameters to have zero-mean entries, the model/gradient parameters obtained by \eqref{eq:dl_w} and \eqref{eq:ul_g} can be regarded as unbiased estimates of their original values, thereby facilitating the convergence analysis.
		Second, by normalizing the parameters to have unit-variance entries, we need to focus only on the impact of the power control of other cells on the error-induced gap regardless of the specific values of their model/gradient parameters.
\end{remark}

\begin{remark}
	As mentioned above, compression techniques can be used to adjust the model sizes in different cells to be the same and enhance the communication efficiency.
	If compression is applied, then the effect of compression errors should be considered in addition to the wireless communication errors.
\end{remark}

\begin{remark}
	Multiple antennas can be equipped at both the receiver and transmitter to improve the transmission accuracy by exploiting the diversity gain, and to simultaneously transmit multiple dimensions of model/gradient parameters by exploiting the spatial multiplexing gain.
	In such scenarios, the receive and transmit beamforming should be carefully designed to distinguish spatially multiplexed signals while compensating for channel fading and mitigating the inter-cell interference, which will be studied in our future work.
\end{remark}

\section{Convergence Analysis and Problem Formulation} \label{Sec:CvgPrb}

In this section, we present the convergence analysis of AirComp-assisted FL, taking account of both the downlink and uplink transmission distortions, based on which we formulate a cooperative multi-cell optimization problem to balance the learning performance among different FL tasks in multiple cells.

\subsection{Convergence Analysis}

To proceed, we first make the following two standard assumptions on the loss functions.

\begin{assumption} \label{assump:lower_bound}
	Global loss function $F_m(\cdot)$, $\forall \, m \in \mathcal{M}$, is lower bounded, i.e., $F_m(\bm{w}) \ge F_m(\bm{w}^\star) > -\infty$, $\forall \, \bm{w} \in \mathbb{R}^D$.
\end{assumption}

\begin{assumption} \label{assump:smooth}
	All local loss functions $F_{m, k}(\cdot)$, $\forall \, k \in \mathcal{K}_m$, $\forall \, m \in \mathcal{M}$, are continuously differentiable and their gradients $\nabla F_{m, k}(\cdot)$ are Lipschitz continuous with constant $L > 0$, i.e., for any $\bm{w}, \, \widetilde{\bm{w}} \in \mathbb{R}^D$,
	\begin{align} \label{eq:lipschitz}
		\left\|\nabla F_{m, k}(\bm{w}) - \nabla F_{m, k}(\widetilde{\bm{w}})\right\| \le L \left\|\bm{w} - \widetilde{\bm{w}}\right\|,
	\end{align}
	which is also equivalent to
	\begin{align} \label{eq:smooth}
		& F_{m, k}(\bm{w}) - F_{m, k}(\widetilde{\bm{w}}) \notag \\*
		& \le \left<\nabla F_{m ,k}(\widetilde{\bm{w}}), \bm{w} - \widetilde{\bm{w}}\right> + \frac{L}{2} \left\|\bm{w} - \widetilde{\bm{w}}\right\|^2.
	\end{align}
\end{assumption}

Assumption \ref{assump:lower_bound} is the minimal assumption required for the loss function to converge to a stationary point \cite{allen2018natasha}.
Assumption \ref{assump:smooth} ensures that the local gradients do not change at an arbitrarily high rate with respect to the model parameter \cite{bottou2018optimization}.
This assumption is commonly adopted for the convergence analysis in most existing studies on FL \cite{friedlander2012hybrid, wang2019adaptive, chen2021joint}.
Based on the above assumptions, we present an upper bound of the time-average norm of the global gradients in the following theorem.
\begin{theorem} \label{theorem:convergence}
	In cell $m$, by setting $0 < \eta_m^t \equiv \eta_m < \frac{1}{L}$, the time-average norm of the global gradients after $T$ rounds is upper bounded as
	\begin{align} \label{eq:cvg}
		& \frac{1}{T} \sum_{t = 0}^{T - 1} \mathbb{E}\left[\left\|\nabla F_m(\bm{w}_m^t)\right\|^2\right] \le \underset{\text{Initial gap}}{\underbrace{\frac{2}{\eta_m T} \left(F_m(\bm{w}^0_m) - F_m(\bm{w}_m^\star)\right)}} \notag \\
		& \quad + \underbrace{\frac{1}{T} \sum_{t = 0}^{T - 1} \sum_{d = 1}^{D} \left(\frac{L^2}{K_m} \sum_{k \in \mathcal{K}} \mathbb{E}\left[\Re\left\{(e_{k, d}^{\rm dl})^{t + 1}\right\}^2\right]\right.} \notag \\*
		& \quad \underset{\text{Error-induced gap}}{\underbrace{+ \left. \frac{L \eta_m}{K_m^2} \mathbb{E}\left[\Re\left\{(e_{m, d}^{\rm ul})^{t + 1}\right\}^2\right]\right)}}, \, \forall \, m \in \mathcal{M}
	\end{align}
	where the expectation is taken over normalized transmit symbols and receiver noise.
\end{theorem}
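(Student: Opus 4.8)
The plan is to derive a per-round descent inequality from the $L$-smoothness of $F_m$ and then telescope over the $T$ rounds. First I would note that since $F_m = \frac{1}{K_m}\sum_{k \in \mathcal{K}_m} F_{m,k}$ is an average of $L$-smooth functions it is itself $L$-smooth, so the quadratic upper bound \eqref{eq:smooth} applies to $F_m$. Evaluating it at $\bm{w} = \bm{w}_m^t$, $\widetilde{\bm{w}} = \bm{w}_m^{t-1}$ and inserting the update rule \eqref{eq:global_update}, i.e. $\bm{w}_m^t - \bm{w}_m^{t-1} = -\eta_m \widehat{\bm{g}}_m^t$, gives
\begin{align}
	F_m(\bm{w}_m^t) \le F_m(\bm{w}_m^{t-1}) - \eta_m \langle \nabla F_m(\bm{w}_m^{t-1}), \widehat{\bm{g}}_m^t \rangle + \frac{L\eta_m^2}{2} \|\widehat{\bm{g}}_m^t\|^2. \notag
\end{align}
Using \eqref{eq:ul_g} and \eqref{eq:local_grad} I would decompose the received gradient as $\widehat{\bm{g}}_m^t = \bar{\bm{g}}_m^t + \bm{u}_m^t$, where $\bar{\bm{g}}_m^t = \frac{1}{K_m}\sum_{k \in \mathcal{K}_m} \nabla F_{m,k}(\widehat{\bm{w}}_k^t)$ is the aggregate of the true local gradients evaluated at the noisy received models $\widehat{\bm{w}}_k^t$ of \eqref{eq:dl_w}, and $\bm{u}_m^t = \frac{1}{K_m}\Re\{(\bm{e}_m^{\rm ul})^t\}$ is the zero-mean additive uplink error.

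The core algebraic step is to control $-\eta_m \langle \nabla F_m, \bar{\bm{g}}_m^t \rangle + \frac{L\eta_m^2}{2}\|\bar{\bm{g}}_m^t\|^2$. I would apply the polarization identity $-\eta_m\langle a,b\rangle = -\frac{\eta_m}{2}\|a\|^2 - \frac{\eta_m}{2}\|b\|^2 + \frac{\eta_m}{2}\|a-b\|^2$ with $a = \nabla F_m(\bm{w}_m^{t-1})$ and $b = \bar{\bm{g}}_m^t$; combining with the quadratic term yields $-\frac{\eta_m}{2}\|\nabla F_m\|^2 - \frac{\eta_m}{2}(1-L\eta_m)\|\bar{\bm{g}}_m^t\|^2 + \frac{\eta_m}{2}\|\nabla F_m - \bar{\bm{g}}_m^t\|^2$. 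This is exactly where the step-size condition $0 < \eta_m < 1/L$ enters: it makes $1 - L\eta_m > 0$, so the middle term is nonpositive and can be discarded. The residual $\nabla F_m(\bm{w}_m^{t-1}) - \bar{\bm{g}}_m^t = \frac{1}{K_m}\sum_k (\nabla F_{m,k}(\bm{w}_m^{t-1}) - \nabla F_{m,k}(\widehat{\bm{w}}_k^t))$ is then bounded by Jensen's inequality followed by the Lipschitz property \eqref{eq:lipschitz}, giving $\|\nabla F_m - \bar{\bm{g}}_m^t\|^2 \le \frac{L^2}{K_m}\sum_k \|\widehat{\bm{w}}_k^t - \bm{w}_m^{t-1}\|^2 = \frac{L^2}{K_m}\sum_k \|\Re\{(\bm{e}_k^{\rm dl})^t\}\|^2$, the downlink error term. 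The uplink error enters only through the quadratic term: since $\bm{u}_m^t$ is zero-mean and, under the stated expectation convention, uncorrelated with both $\nabla F_m$ and $\bar{\bm{g}}_m^t$, the cross terms $\langle \nabla F_m, \bm{u}_m^t\rangle$ and $\langle \bar{\bm{g}}_m^t, \bm{u}_m^t\rangle$ vanish in expectation, leaving $\frac{L\eta_m^2}{2}\|\bm{u}_m^t\|^2 = \frac{L\eta_m^2}{2K_m^2}\|\Re\{(\bm{e}_m^{\rm ul})^t\}\|^2$.

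Collecting the pieces and taking total expectation, I would obtain the one-step bound $\frac{\eta_m}{2}\mathbb{E}\|\nabla F_m(\bm{w}_m^{t-1})\|^2 \le \mathbb{E}[F_m(\bm{w}_m^{t-1})] - \mathbb{E}[F_m(\bm{w}_m^t)] + \frac{\eta_m L^2}{2K_m}\sum_k \mathbb{E}\|\Re\{(\bm{e}_k^{\rm dl})^t\}\|^2 + \frac{L\eta_m^2}{2K_m^2}\mathbb{E}\|\Re\{(\bm{e}_m^{\rm ul})^t\}\|^2$. Summing over $t = 1,\dots,T$ telescopes the loss differences to $F_m(\bm{w}_m^0) - \mathbb{E}[F_m(\bm{w}_m^T)]$, which Assumption \ref{assump:lower_bound} bounds by $F_m(\bm{w}_m^0) - F_m(\bm{w}_m^\star)$; dividing by $\eta_m T/2$ and writing each squared norm as a sum over the $D$ coordinates reproduces \eqref{eq:cvg} (up to the harmless index shift $t \mapsto t+1$).

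The main obstacle is justifying the two expectation arguments rigorously. The downlink distortion cannot be treated as an unbiased gradient perturbation, because $\nabla F_{m,k}(\widehat{\bm{w}}_k^t)$ is a nonlinear function of the noisy model $\widehat{\bm{w}}_k^t$; it must instead be absorbed deterministically through the Lipschitz bound, which is what forces the downlink coefficient to be $L^2$ rather than $L\eta_m$. Conversely, the uplink coefficient $L\eta_m/K_m^2$ is smaller by the factor $L\eta_m < 1$ than a naive inner-product split would give, so care is needed to route the uplink error exclusively through the quadratic term and to verify that its cross terms genuinely vanish — this relies on conditioning on the history up to round $t-1$ and on the zero-mean, mutually uncorrelated modeling of the normalized transmit symbols and receiver noise stipulated in the system model.
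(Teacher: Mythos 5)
Your proposal is correct and follows essentially the same route as the paper's appendix proof: the same descent lemma from $L$-smoothness, the same split of $\widehat{\bm{g}}_m^t$ into the Lipschitz-absorbed downlink part and the zero-mean uplink part whose cross terms vanish in expectation (the paper's $\mathbb{E}[A_3]=0$ step), and the same Jensen-plus-telescoping finish. Your polarization identity with the discarded nonpositive term $-\tfrac{\eta_m(1-L\eta_m)}{2}\|\bar{\bm{g}}_m^t\|^2$ is algebraically identical to the paper's regrouping in $\diamondsuit_2$ and the bound $-2\langle a,b\rangle\le\|a\|^2+\|b\|^2$ in $\diamondsuit_3$, producing the same one-step inequality with the same slack.
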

\begin{proof}
	Please refer to Appendix.
\end{proof}

According to Theorem \ref{theorem:convergence}, the upper bound of $\frac{1}{T} \sum_{t = 0}^{T - 1} \mathbb{E}\left[\|\nabla F_m(\bm{w}_m^t)\|^2\right]$ given in \eqref{eq:cvg} consists of two parts, i.e., \textit{initial gap} and \textit{error-induced gap}.
The initial gap is mainly determined by the distance between the values of the global loss function at the initial point and the optimal point, which approaches zero as the number of rounds, $T$, goes to infinity.
Hence, when $T$ is large, the convergence gap is dominated by the error-induced gap, due to the receiver noise, channel fading, and inter-cell interference in both downlink and uplink transmissions.
Inspired by this observation, we aim to minimize the error-induced gap in each time slot for transmitting one-dimension model/gradient parameter in each cell, given by
\begin{align} \label{eq:gap_per_time_slot}
	& \frac{L^2}{K_m} \sum_{k \in \mathcal{K}_m} \mathbb{E}\left[\Re\left\{(e_{k, d}^{\rm dl})^t\right\}^2\right] + \frac{L \eta_m}{K_m^2} \mathbb{E}\left[\Re\left\{(e_{m, d}^{\rm ul})^t\right\}^2\right], \notag \\
	& \hspace{18mm} \forall \, d \in \{1, 2, \dots, D\}, \, \forall \, t \in \mathcal{T}, \, \forall \, m \in \mathcal{M}.
\end{align}
This is due to the fact that, when \eqref{eq:gap_per_time_slot} is minimized, the error-induced gap is minimized.
Nevertheless, simply focusing on minimizing \eqref{eq:gap_per_time_slot} for each cell may lead to severe inter-cell interference in multi-cell wireless networks and in turn deteriorate the learning performance of other cells.
Therefore, a cooperative design is required to balance the learning performance among different FL tasks in multiple cells.

\begin{remark}
	From the second term in \eqref{eq:gap_per_time_slot}, it is observed that by exploiting the gradient aggregation in the uplink, setting a small learning rate $\eta_m$ can reduce the impact of the uplink aggregation error.
	This is because the uplink aggregation error is added on the gradients in \eqref{eq:ul_g}, which enables learning rate $\eta_m$ to rescale the error in \eqref{eq:global_update}.
\end{remark}

\subsection{Problem Formulation}

For presentation clarity, we omit the time indices in the following problem formulation.
To balance the learning performance among different FL tasks, we first define the gap region, $\mathcal{G}$, to be the set of tuples $(\Delta_1, \Delta_2, \dots, \Delta_M)$, which represents the instantaneous error-induced gaps of all cells that can be simultaneously achieved under limited wireless communication resources.
Since the model/gradient parameters are transmitted over the same radio channel in an analog fashion, the communication error mainly stems from the signal misalignment and distortion due to limited transmit power budgets.
Specifically, the gap region, $\mathcal{G}$, can be expressed as
\begin{align}
	\mathcal{G} = \bigcup \left\{\left(\Delta_1, \Delta_2, \dots, \Delta_M\right) \mid \Delta_m \ge \texttt{Gap}_m, \, \forall \, m \in \mathcal{M}\right\}
\end{align}
where
\begin{align} \label{eq:gap_m}
	\texttt{Gap}_m = \underset{\texttt{Gap}_m^{\rm dl}}{\underbrace{\frac{L^2 E_m^{\rm dl}}{K_m}}} + \underset{\texttt{Gap}_m^{\rm ul}}{\underbrace{\frac{L \eta_m E_m^{\rm ul}}{K_m^2}}}
\end{align}
denotes the error-induced gap with
\begin{align} \label{eq:E_dl}
	& E_m^{\rm dl} = \sum_{k \in \mathcal{K}_m} \mathbb{E}\left[\Re\left\{e_{k, d}^{\rm dl}\right\}^2\right] \notag \\
	& = \sum_{k \in \mathcal{K}_m} \nu_m^2 \left(\sum_{l \in \mathcal{M} \setminus \{m\}} \frac{\Re\left\{(h_k^{\rm dl})^\dagger h_{l, k}^{\rm dl}\right\}^2 p_l^{\rm dl}}{|h_k^{\rm dl}|^4 p_m^{\rm dl}} + \frac{(\sigma_k^{\rm dl})^2}{2 |h_k^{\rm dl}|^2 p_m^{\rm dl}}\right)
\end{align}
and
\begin{align} \label{eq:E_ul}
	& E_m^{\rm ul} = \mathbb{E}\left[\Re\left\{e_{m, d}^{\rm ul}\right\}^2\right] = \sum_{k \in \mathcal{K}_m} \left(\frac{|h_k^{\rm ul}| \sqrt{p_k^{\rm ul}}}{\sqrt{c_m}} - \upsilon_k\right)^2 \notag \\
	& \quad + \sum_{l \in \mathcal{M} \setminus \{m\}} \sum_{k' \in \mathcal{K}_l} \frac{\Re\left\{h_{k', m}^{\rm ul} (h_{k'}^{\rm ul})^\dagger\right\}^2 p_{k'}^{\rm ul}}{|h_{k'}^{\rm ul}|^2 c_m} + \frac{(\sigma_m^{\rm ul})^2}{2 c_m}.
\end{align}
Here, \eqref{eq:E_dl} reflects the impact of the sum of downlink dissemination errors at the devices in cell $m$ and \eqref{eq:E_ul} reflects the impact of the uplink aggregation error at BS $m$ on $\texttt{Gap}_m$.
Subsequently, $\texttt{Gap}_m$ is determined by
\begin{align}
	\mathcal{V} = & \left\{\left.\left(\bigcup_{m \in \mathcal{M}} \left\{p_m^{\rm dl}, c_m, \left\{p_k^{\rm ul}\right\}_{k \in \mathcal{K}_m}\right\}\right) \right| 0 \le p_m^{\rm dl} \le P_m^{\rm dl}, \right. \notag \\
	& \left. c_m > 0, 0 \le p_k^{\rm ul} \le P_k^{\rm ul}, \, \forall \, k \in \mathcal{K}_m, \, \forall \, m \in \mathcal{M}\right\}
\end{align}
where $P_m^{\rm dl}$ and $P_k^{\rm ul}$ denote the maximum transmit power budgets in the downlink and uplink, respectively.

As the value of $\texttt{Gap}_m$ in each cell is influenced by the inter-cell interference, there is a performance trade-off among different FL tasks in multiple cells.
In particular, reducing one cell's error-induced gap may increase the error-induced gaps of other cells.
Therefore, we aim to find a feasible set, $\mathcal{V}$, to reach Pareto boundary $\mathcal{P}$ of gap region $\mathcal{G}$ for balancing the learning performance among multiple cells, where the Pareto optimality of a tuple $(\Delta_1, \Delta_2, \dots, \Delta_M)$ is defined as follows \cite{jorswieck2008complete}.

\begin{definition}
	Tuple $(\Delta_1, \Delta_2, \dots, \Delta_M)$ is Pareto optimal if there is no other tuple $(\widetilde{\Delta}_1, \widetilde{\Delta}_2, \dots, \widetilde{\Delta}_M)$ with $(\widetilde{\Delta}_1, \widetilde{\Delta}_2, \dots, \widetilde{\Delta}_M) \preceq (\Delta_1, \Delta_2, \dots, \Delta_M)$ and $(\widetilde{\Delta}_1, \widetilde{\Delta}_2, \dots, \widetilde{\Delta}_M) \ne (\Delta_1, \Delta_2, \dots, \Delta_M)$, where $\preceq$ denotes the component-wise inequality.
\end{definition}

\begin{figure}[t]
\centering
\includegraphics[scale=0.5]{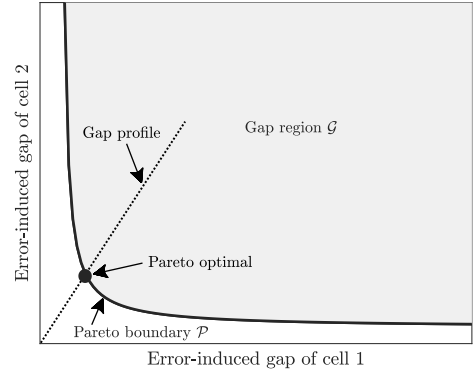}
\caption{Illustration of Pareto boundary $\mathcal{P}$ of gap region $\mathcal{G}$ in a two-cell wireless network.} 
\label{fig:pareto}
\end{figure}

A two-cell example is shown in Fig. \ref{fig:pareto}, where the gray area denotes gap region $\mathcal{G}$ and its lower-left boundary represents Pareto boundary $\mathcal{P}$.
On such a boundary, we can only reduce one cell's error-induced gap at the cost of increasing the error-induced gap of the other cell.
Here, we leverage the profiling technique \cite{zhang2010cooperative, cao2021cooperative} to characterize the Pareto boundary by coordinating all BSs to minimize the sum of error-induced gaps of all cells.
Specifically, let $\bm{\kappa} = \left[\kappa_1, \kappa_2, \dots, \kappa_M\right]$ denote a given profiling vector, which satisfies $\kappa_m \ge 0$, $\forall \, m \in \mathcal{M}$, and $\sum_{m \in \mathcal{M}} \kappa_m = 1$.
The gap tuple on Pareto boundary $\mathcal{P}$ can be obtained by solving the following problem
\begin{subequations} \label{p:origin}
\begin{align}
	\underset{\{p_m^{\rm dl}\}, \{p_k^{\rm ul}\}}{\underset{\zeta, \{c_m\},}{\text{minimize}}} \quad & \zeta \\
	\text{subject to} \quad & \texttt{Gap}_m \le \kappa_m \zeta, \, \forall \, m \in \mathcal{M} \label{cons:gap_m} \\
	& 0 \le p_m^{\rm dl} \le P^{\rm dl}_m, \, \forall \, m \in \mathcal{M} \label{cons:power_dl} \\
	& 0 \le p_k^{\rm ul} \le P^{\rm ul}_k, \, \forall \, k \in \mathcal{K}_m, \, \forall \, m \in \mathcal{M} \label{cons:power_ul} \\
	& c_m > 0, \, \forall \, m \in \mathcal{M} \label{cons:denoising_ul} \\
	& \zeta \ge 0 \label{cons:zeta}
\end{align}
\end{subequations}
where $\zeta$ denotes the sum of error-induced gaps of all cells.
Thus, the gap tuple can be represented as $(\Delta_1, \Delta_2, \dots, \Delta_M) = (\kappa_1 \zeta, \kappa_2 \zeta, \dots, \kappa_M \zeta)$, where a smaller value of $\kappa_m$ implies a more stringent requirement for the error-induced gap of cell $m$.
The corresponding Pareto optimal gap tuple can be geometrically viewed as the intersection of the ray in the direction of $\bm{\kappa}$ (gap profile) and Pareto boundary $\mathcal{P}$, as shown in Fig. \ref{fig:pareto}.

Since the downlink and uplink transmissions are carried out sequentially, the local gradients for uplink aggregation can be obtained only after obtaining the global model from the downlink dissemination, and vice versa. 
As a result, we cannot obtain the exact expressions for constraints \eqref{cons:gap_m}, which makes it highly intractable to simultaneously tackle the downlink and uplink optimization in problem \eqref{p:origin}.
Besides, due to the presence of the inter-cell interference, the downlink and uplink transmit power levels of different cells are coupled in constraints \eqref{cons:gap_m}.
The uplink transmit powers and receive normalizing factor in each cell are also coupled in constraints \eqref{cons:gap_m}.
To address these challenging issues, we propose a cooperative multi-cell FL optimization framework in the following.

\section{Cooperative Multi-Cell FL Optimization Framework} \label{Sec:SysOpt}

In this section, we present a cooperative multi-cell FL optimization framework to balance the learning performance among different FL tasks in multiple cells.

Denote $\zeta = \zeta^{\rm dl}_0 + \zeta^{\rm ul}_0$, where $\zeta^{\rm dl}_0$ and $\zeta^{\rm ul}_0$ are used to quantify the sum of instantaneous error-induced gaps generated by downlink and uplink transmissions, respectively. 
Hence, we rewrite problem \eqref{p:origin} as
\begin{subequations} \label{p:origin_2}
	\begin{align}
		\underset{\{p_m^{\rm dl}\}, \{p_k^{\rm ul}\}}{\underset{\zeta^{\rm dl}_0, \zeta^{\rm ul}_0, \{c_m\},}{\text{minimize}}} \quad & \zeta^{\rm dl}_0 + \zeta^{\rm ul}_0 \\
		\text{subject to} \quad & \texttt{Gap}_m^{\rm dl} \le \kappa_m \zeta^{\rm dl}_0, \, \forall \, m \in \mathcal{M} \label{cons:e_dl} \\
		& \texttt{Gap}_m^{\rm ul} \le \kappa_m \zeta^{\rm ul}_0, \, \forall \, m \in \mathcal{M} \label{cons:e_ul} \\
		& \zeta^{\rm dl}_0 \ge 0 \label{cons:zeta_dl} \\
		& \zeta^{\rm ul}_0 \ge 0 \label{cons:zeta_ul} \\
		& {\rm constraints} \,\, \eqref{cons:power_dl}, \, \eqref{cons:power_ul}, \, \eqref{cons:denoising_ul}.
	\end{align}
\end{subequations}
The downlink and uplink transmissions are decoupled in problem \eqref{p:origin_2}, which allows us to separately optimize the downlink and uplink transmissions.

\subsection{Cooperative Downlink Transmission}

For the downlink model dissemination, the optimization problem can be written as
\begin{subequations} \label{p:dl_origin}
	\begin{align}
		\underset{\zeta^{\rm dl}_0, \{p_m^{\rm dl}\}}{\text{minimize}} \quad & \zeta^{\rm dl}_0 \\
		\text{subject to} \quad & {\rm constraints} \,\, \eqref{cons:power_dl}, \, \eqref{cons:e_dl}, \, \eqref{cons:zeta_dl}.
	\end{align}
\end{subequations}
By substituting \eqref{eq:E_dl} into constraints \eqref{cons:e_dl}, we have
\begin{subequations} \label{p:dl_origin2}
\begin{align}
	\underset{\zeta^{\rm dl}, \{p_m^{\rm dl}\}}{\text{minimize}} \quad & \zeta^{\rm dl} \\
	\text{subject to} \quad & \sum_{l \in \mathcal{M} \setminus \{m\}} p_l^{\rm dl} \sum_{k \in \mathcal{K}_m} \frac{\Re\left\{(h_k^{\rm dl})^\dagger h_{l, k}^{\rm dl}\right\}^2}{|h_k^{\rm dl}|^4} \notag \\
	& \quad + \sum_{k \in \mathcal{K}_m} \frac{(\sigma_k^{\rm dl})^2}{2 |h_k^{\rm dl}|^2} \le \frac{\kappa_m^{\rm dl} \zeta^{\rm dl}}{\nu_m^2} p_m^{\rm dl}, \, \forall \, m \in \mathcal{M} \\
	& 0 \le p_m^{\rm dl} \le P^{\rm dl}_m, \, \forall \, m \in \mathcal{M} \\
	& \zeta^{\rm dl} \ge 0
\end{align}
\end{subequations}
where $\zeta^{\rm dl} = \frac{\zeta_0^{\rm dl}}{L^2}$ and $\kappa_m^{\rm dl} = \kappa_m K_m$.
Problem \eqref{p:dl_origin2} can be solved by tackling a sequence of feasibility detection problems, given $\zeta^{\rm dl}$.
Specifically, for any given $\zeta^{\rm dl}$, problem \eqref{p:dl_origin2} is reduced to
\begin{subequations} \label{p:find_p_dl}
\begin{align}
	\text{find} \quad & \{p_m^{\rm dl}\} \\
	\text{subject to} \quad & \sum_{l \in \mathcal{M} \setminus \{m\}} p_l^{\rm dl} \sum_{k \in \mathcal{K}_m} \frac{\Re\left\{(h_k^{\rm dl})^\dagger h_{l, k}^{\rm dl}\right\}^2}{|h_k^{\rm dl}|^4} \notag \\
	& \quad + \sum_{k \in \mathcal{K}_m} \frac{(\sigma_k^{\rm dl})^2}{2 |h_k^{\rm dl}|^2} \le \frac{\kappa_m^{\rm dl} \zeta^{\rm dl}}{\nu_m^2} p_m^{\rm dl}, \, \forall \, m \in \mathcal{M} \label{cons:find_p_dl} \\
	& \quad 0 \le p_m^{\rm dl} \le P_m^{\rm dl}, \, \forall \, m \in \mathcal{M}.
\end{align}
\end{subequations}
Here, constraints \eqref{cons:find_p_dl} can be rewritten as
\begin{align}
	\sqrt{\varpi_{m, 0}^2 + \sum_{l \in \mathcal{M}} \varpi_{m, l}^2 p_l^{\rm dl}} \le \sqrt{\frac{\kappa_m^{\rm dl} \zeta^{\rm dl}}{\nu_{0, m}^2} p_m^{\rm dl}}, \, \forall \, m \in \mathcal{M} \label{cons:find_p_dl_2}
\end{align}
where
\begin{align}
	\varpi_{m, 0}^2 & = \sum_{k \in \mathcal{K}_m} \frac{(\sigma_k^{\rm dl})^2}{2 |h_k^{\rm dl}|^2}, \, \varpi_{m, m}^2 = 0 \notag \\
	\varpi_{m, l}^2 & = \sum_{k \in \mathcal{K}_m} \frac{\Re\left\{(h_k^{\rm dl})^\dagger h_{l, k}^{\rm dl}\right\}^2}{|h_k^{\rm dl}|^4}, \, \forall \, l \in \mathcal{M} \setminus \{m\}.
\end{align}

By denoting $\bm{q}^{\rm dl} = \left[\sqrt{p_1^{\rm dl}}, \sqrt{p_2^{\rm dl}}, \dots, \sqrt{p_M^{\rm dl}}\right]^{\sf T}$, $\bm{\Pi}_m = {\rm diag}\left(\left[\varpi_{m, 0}, \varpi_{m, 1}, \dots, \varpi_{m, M}\right]\right)$, and $\bm{\vartheta}_m = \left[\vartheta_{m, 1}, \vartheta_{m, 2}, \dots, \vartheta_{m, M}\right]^{\sf T}$ with
\begin{align}
	\vartheta_{m, l} & = \left\{\begin{array}{ll}
		\sqrt{\frac{\kappa_m^{\rm dl} \zeta^{\rm dl}}{\nu_m^2}}, & l = m \\
		0, & l \ne m
	\end{array}\right.
\end{align}
constraints \eqref{cons:find_p_dl_2} can be represented as a set of second-order cone (SOC) constraints.
Problem \eqref{p:find_p_dl} is thus equivalent to the following SOCP problem
\begin{subequations} \label{p:find_q_dl}
\begin{align}
	\text{find} \quad & \bm{q}^{\rm dl} \\*
	\text{subject to} \quad & \left\|[1; \bm{q}^{\rm dl}]^{\sf T} \bm{\Pi}_m\right\| \le (\bm{q}^{\rm dl})^{\sf T} \bm{\vartheta}_m, \, \forall \, m \in \mathcal{M} \\*
	& 0 \le q_m \le \sqrt{P_m^{\rm dl}}, \, \forall \, m \in \mathcal{M}
\end{align}
\end{subequations}
which can be efficiently solved by using convex optimization tools, e.g., CVX \cite{cvx}.
Thus, the optimal downlink transmit power control given $\zeta^{\rm dl}$ can be obtained by $(p_m^{\rm dl})^* = ((q_m^{\rm dl})^*)^2$, $\forall \, m \in \mathcal{M}$, where $(\bm{q}^{\rm dl})^* = [(q_1^{\rm dl})^*, (q_2^{\rm dl})^*, \dots, (q_M^{\rm dl})^*]^{\sf T}$ is the optimal solution to problem \eqref{p:find_q_dl}.
Furthermore, by predefining a solution precision, $\epsilon^{\rm dl} > 0$, we can leverage the bisection search method to obtain optimal solutions $(\zeta^{\rm dl})^\star$ and $\{(p_m^{\rm dl})^\star\}_{m \in \mathcal{M}}$ via checking the feasibility of the solution to problem \eqref{p:find_q_dl}, which is summarized in Algorithm \ref{alg:downlink}.

\begin{algorithm}[t]
	\caption{Cooperative downlink optimization for problem \eqref{p:dl_origin}.}
	\label{alg:downlink}
	
	\KwIn{Standard deviations of global models $\{\nu_m\}$, profiling vector $\bm{\kappa}$, and solution precision parameter $\epsilon^{\rm dl}$.}
	
	$\zeta^{\rm dl}_{\rm low} \leftarrow 0$, $\zeta^{\rm dl}_{\rm up} \leftarrow 1$. \\
	\While{$\left|\zeta^{\rm dl}_{\rm up} - \zeta^{\rm dl}_{\rm low}\right| > \epsilon^{\rm dl}$}{
		$\zeta^{\rm dl} \leftarrow \frac{\zeta^{\rm dl}_{\rm up} + \zeta^{\rm dl}_{\rm low}}{2}$. \\
		Obtain $\{(p_m^{\rm dl})^*\}_{m \in \mathcal{M}}$ by solving problem \eqref{p:find_q_dl}. \\
		\uIf{Problem \eqref{p:find_q_dl} is feasible}{
			$\zeta^{\rm dl}_{\rm up} \leftarrow \zeta^{\rm dl}$. \\
			$(p_m^{\rm dl})^\star \leftarrow (p_m^{\rm dl})^*, \, \forall \, m \in \mathcal{M}$.
		}
		\Else{
			$\zeta^{\rm dl}_{\rm low} \leftarrow \zeta^{\rm dl}$.
		}
	}
	
	\KwOut{$(\zeta^{\rm dl})^\star \leftarrow \frac{\zeta^{\rm dl}_{\rm up} + \zeta^{\rm dl}_{\rm low}}{2}$ and optimal downlink transmit powers $\{(p_m^{\rm dl})^\star\}_{m \in \mathcal{M}}$.}
\end{algorithm}

\subsection{Cooperative Uplink Transmission}

For the uplink gradient aggregation, the optimization problem is given by
\begin{subequations} \label{p:ul_origin}
	\begin{align}
		\underset{\zeta^{\rm ul}_0, \{c_m\}, \{p_k^{\rm ul}\}}{\text{minimize}} \quad & \zeta^{\rm ul}_0 \\*
		\text{subject to} \quad & {\rm constraints} \,\, \eqref{cons:power_ul}, \, \eqref{cons:denoising_ul}, \, \eqref{cons:e_ul}, \, \eqref{cons:zeta_ul}.
	\end{align}
\end{subequations}
It is intractable to directly solve problem \eqref{p:ul_origin} due to the coupling between uplink transmit power levels $\bigcup_{m \in \mathcal{M}} \{p_k^{\rm ul}\}_{k \in \mathcal{K}_m}$ and receive normalizing factors $\{c_m\}_{m \in \mathcal{M}}$.
To tackle this issue, we first obtain the optimal receive normalizing factors with fixed uplink transmit power levels.
Specifically, given $\bigcup_{m \in \mathcal{M}} \{p_k^{\rm ul}\}_{k \in \mathcal{K}_m}$, problem \eqref{p:ul_origin} can be decoupled into $M$ subproblems, where the subproblem for cell $m$ is given by
\begin{align} \label{p:c_m}
	\underset{c_m > 0}{\text{minimize}} \quad & \sum_{k \in \mathcal{K}_m} \left(\frac{|h_k^{\rm ul}| \sqrt{p_k^{\rm ul}}}{\sqrt{c_m}} - \upsilon_k\right)^2 + \frac{(\sigma_m^{\rm ul})^2}{2 c_m} \notag \\*
	& + \sum_{l \in \mathcal{M} \setminus \{m\}} \sum_{k' \in \mathcal{K}_l} \frac{\Re\left\{h_{k', m}^{\rm ul} (h_{k'}^{\rm ul})^\dagger\right\}^2 p_{k'}^{\rm ul}}{|h_{k'}^{\rm ul}|^2 c_m}.
\end{align}
It is easy to verify that problem \eqref{p:c_m} is a convex quadratic problem by treating $\frac{1}{\sqrt{c_m}}$ as an optimization variable.
Hence, the optimal receive normalizing factors can be computed as
\begin{align} \label{eq:c_m}
	c_m^\star & = \left(\frac{\sum_{k \in \mathcal{K}_m} |h_k^{\rm ul}|^2 p_k^{\rm ul} + (\sigma_m^{\rm ul})^2 / 2}{\sum_{k \in \mathcal{K}_m} |h_k^{\rm ul}| \sqrt{p_k^{\rm ul}} \upsilon_k} \right. \notag \\*
	& \left. + \frac{\sum_{l \in \mathcal{M} \setminus \{m\}} \sum_{k' \in \mathcal{K}_l} \Re\left\{h_{k', m}^{\rm ul} (h_{k'}^{\rm ul})^\dagger\right\}^2 p_{k'}^{\rm ul} / |h_{k'}^{\rm ul}|^2}{\sum_{k \in \mathcal{K}_m} |h_k^{\rm ul}| \sqrt{p_k^{\rm ul}} \upsilon_k}\right)^2, \notag \\*
	& \hspace{59mm} \forall \, m \in \mathcal{M}.
\end{align}

On the other hand, substituting \eqref{eq:c_m} into problem \eqref{p:ul_origin}, we have
\begin{subequations} \label{p:ul_origin2}
\begin{align}
	\underset{\zeta^{\rm ul}, \{p_k^{\rm ul}\}}{\text{minimize}} \quad & \zeta^{\rm ul} \\
	\text{subject to} \quad & \left(\sum_{k \in \mathcal{K}_m} \upsilon_k^2 - \kappa_m^{\rm ul} \zeta^{\rm ul}\right) \left(\sum_{k \in \mathcal{K}_m} |h_k^{\rm ul}|^2 p_k^{\rm ul} + \frac{(\sigma_m^{\rm ul})^2}{2} \right. \notag \\
	& \quad \left. + \sum_{l \in \mathcal{M} \setminus \{m\}} \sum_{k' \in \mathcal{K}_l} \frac{\Re\left\{h_{k', m}^{\rm ul} (h_{k'}^{\rm ul})^\dagger\right\}^2 p_{k'}^{\rm ul}}{|h_{k'}^{\rm ul}|^2}\right) \notag \\
	& \le \left(\sum_{k \in \mathcal{K}_m} |h_k^{\rm ul}| \sqrt{p_k^{\rm ul}} \upsilon_k\right)^2, \, \forall \, m \in \mathcal{M} \\
	& 0 \le p_k^{\rm ul} \le P_k^{\rm ul}, \, \forall \, k \in \mathcal{K}_m, \, \forall \, m \in \mathcal{M} \\
	& \zeta^{\rm ul} \ge 0
\end{align}
\end{subequations}
where $\zeta^{\rm ul} = \frac{\zeta_0^{\rm ul}}{L}$ and $\kappa_m^{\rm ul} = \frac{\kappa_m K_m^2}{\eta_m}$.
This problem can be solved in a way similar to solving problem \eqref{p:dl_origin2}.
Specifically, given $\zeta^{\rm ul}$, problem \eqref{p:ul_origin2} becomes
\begin{subequations} \label{p:find_p_ul}
\begin{align}
	\text{find} \quad & \{p_k^{\rm ul}\} \\*
	\text{subject to} \quad & \Xi_m \left(\sum_{k \in \mathcal{K}_m} |h_k^{\rm ul}|^2 p_k^{\rm ul} + \frac{(\sigma_m^{\rm ul})^2}{2} \right. \notag \\*
	& \quad \left. + \sum_{l \in \mathcal{M} \setminus \{m\}} \sum_{k' \in \mathcal{K}_l} \frac{\Re\left\{h_{k', m}^{\rm ul} (h_{k'}^{\rm ul})^\dagger\right\}^2 p_{k'}^{\rm ul}}{|h_{k'}^{\rm ul}|^2}\right) \notag \\*
	& \le \left(\sum_{k \in \mathcal{K}_m} |h_k^{\rm ul}| \sqrt{p_k^{\rm ul}} \upsilon_k\right)^2, \, \forall \, m \in \mathcal{M} \label{cons:find_p_ul} \\*
	& 0 \le p_k^{\rm ul} \le P_k^{\rm ul}, \, \forall \, k \in \mathcal{K}_m, \, \forall \, m \in \mathcal{M}
\end{align}
\end{subequations}
where $\Xi_m = \sum_{k \in \mathcal{K}_m} \upsilon_k^2 - \kappa_m^{\rm ul} \zeta^{\rm ul} \ge 0$.
By defining $\bm{\gamma}_m = [\gamma_{m, 1}, \gamma_{m, 2}, \dots, \gamma_{m, K_{\rm tot}}]$, $\forall \, m \in \mathcal{M}$, with $K_{\rm tot} = \sum_{m \in \mathcal{M}} K_m$ and
\begin{align}
	\gamma_{m, k} = \left\{\begin{array}{ll}
		|h_k^{\rm ul}|, &  k \in \mathcal{K}_m \\
		\frac{\Re\left\{h_{k, m}^{\rm ul} (h_{k}^{\rm ul})^\dagger\right\}}{|h_{k}^{\rm ul}|}, & k \in \mathcal{K}_l, \, \forall \, l \in \mathcal{M} \setminus \{m\}
	\end{array}\right.
\end{align}
constraints \eqref{cons:find_p_ul} can be reformulated as
\begin{align}
	& \sqrt{\Xi_m \left(\sum_{l \in \mathcal{M}} \sum_{k \in \mathcal{K}_l} \gamma_{m, k}^2 p_k^{\rm ul} + \frac{(\sigma_m^{\rm ul})^2}{2}\right)} \le \sum_{k \in \mathcal{K}_m} |h_k^{\rm ul}| \sqrt{p_k^{\rm ul}} \upsilon_k, \notag \\
	& \hspace{64mm} \forall \, m \in \mathcal{M}.
\end{align}

Then, problem \eqref{p:find_p_ul} can be rewritten as the following SOCP problem:
\begin{subequations} \label{p:find_q_ul}
\begin{align}
	\text{find} \quad & \bm{q}^{\rm ul} \\*
	\text{subject to} \quad & \sqrt{\Xi_m} \left\|[1; \bm{q}^{\rm ul}]^{\sf T} \bm{\Gamma}_m\right\| \le (\bm{q}^{\rm ul})^{\sf T} \bm{\psi}_m, \, \forall \, m \in \mathcal{M} \\*
	& 0 \le q_k^{\rm ul} \le \sqrt{P_k^{\rm ul}}, \, \forall \, k \in \mathcal{K}_m, \, \forall \, m \in \mathcal{M}
\end{align}
\end{subequations}
where $\bm{q}^{\rm ul} = \left[\sqrt{p_1^{\rm ul}}, \sqrt{p_2^{\rm ul}}, \dots, \sqrt{p_{K_{\rm tot}}^{\rm ul}}\right]^{\sf T}$, $\bm{\Gamma}_m = {\rm diag}\left(\left[\sigma_m^{\rm ul} / \sqrt{2}, \bm{\gamma}_m\right]\right)$, and $\bm{\psi}_m = \left[\psi_{m, 1}, \psi_{m, 2}, \dots, \psi_{m, K_{\rm tot}}\right]^{\sf T}$ with
\begin{align}
	\psi_{m, k} = \left\{\begin{array}{ll}
		|h_k^{\rm ul}| \upsilon_k, & k \in \mathcal{K}_m \\
		0, & k \in \mathcal{K}_l, \, \forall \, l \in \mathcal{M} \setminus \{m\}.
	\end{array}\right.
\end{align}
Consequently, we can obtain the optimal uplink transmit power as $(p_k^{\rm ul})^* = ((q_k^{\rm ul})^*)^2$, $\forall \, k \in \mathcal{K}_m$, $\forall \, m \in \mathcal{M}$, given $\zeta^{\rm ul}$, by solving problem \eqref{p:find_q_ul} with convex optimization tools, where $(\bm{q}^{\rm ul})^* = [(q_1^{\rm ul})^*, (q_2^{\rm ul})^*, \dots, (q_{K_{\rm tot}}^{\rm ul})^*]^{\sf T}$ is the optimal solution to problem \eqref{p:find_q_ul}.
Moreover, by given solution precision $\epsilon^{\rm ul}$, the bisection search method can be used to find the optimal solutions $(\zeta^{\rm ul})^\star$ and $\bigcup_{m \in \mathcal{M}} \{(p_k^{\rm ul})^\star\}_{k \in \mathcal{K}_m}$ to problem \eqref{p:ul_origin2}, while optimal receive normalizing factors $\{c_m^\star\}_{m \in \mathcal{M}}$ can be obtained by substituting $\bigcup_{m \in \mathcal{M}} \{(p_k^{\rm ul})^\star\}_{k \in \mathcal{K}_m}$ into \eqref{eq:c_m}.
The uplink transmission optimization for problem \eqref{p:ul_origin} is summarized in Algorithm \ref{alg:uplink}.

\begin{remark}
	From \eqref{eq:c_m}, it is observed that optimal receive normalizing factors $\{c_m^\star\}$ are inversely proportional to standard deviations $\{\upsilon_k\}$ of local gradient parameters.
	Since the absolute value of each entry of the gradient parameter tends to become smaller when the model parameter approaches the stationary point of the loss function, the standard deviation of all entries of the gradient parameter is also expected to become smaller.
	Besides, \eqref{eq:r_ul_m_d} demonstrates that a larger $\{c_m^\star\}$ can better suppress the uplink aggregation error caused by the receiver noise and inter-cell interference.
	Therefore, transmitting local gradients in the uplink is able to achieve better resistance to the receiver noise and inter-cell interference in the later rounds of training process under the same system constraints.
\end{remark}

\begin{algorithm}[t]
	\caption{Cooperative uplink optimization for problem \eqref{p:ul_origin}.}
	\label{alg:uplink}
	
	\KwIn{Standard deviations of local gradients $\{\upsilon_k\}$, profiling vector $\bm{\kappa}$, and solution precision parameter $\epsilon^{\rm ul}$.}
	
	$\zeta^{\rm ul}_{\rm low} \leftarrow 0$, $\zeta^{\rm ul}_{\rm up} \leftarrow \min_{m \in \mathcal{M}} \frac{\sum_{k \in \mathcal{K}_m} \upsilon_k^2}{\kappa_m^{\rm ul}}$. \\
	\While{$\left|\zeta^{\rm ul}_{\rm up} - \zeta^{\rm ul}_{\rm low}\right| > \epsilon^{\rm ul}$}{
		$\zeta^{\rm ul} \leftarrow \frac{\zeta^{\rm ul}_{\rm up} + \zeta^{\rm ul}_{\rm low}}{2}$. \\
		Obtain $\bigcup_{m \in \mathcal{M}} \{(p_m^{\rm ul})^*\}_{k \in \mathcal{K}_m}$ by solving problem \eqref{p:find_q_dl}. \\
		\uIf{Problem \eqref{p:find_q_dl} is feasible}{
			$\zeta^{\rm ul}_{\rm up} \leftarrow \zeta^{\rm ul}$. \\
			$(p_k^{\rm ul})^\star \leftarrow (p_k^{\rm ul})^*, \, \forall \, k \in \mathcal{K}_m, \, \forall \, m \in \mathcal{M}$.
		}
		\Else{
			$\zeta^{\rm ul}_{\rm low} \leftarrow \zeta^{\rm ul}$.
		}
	}
	Obtain $\{c_m^\star\}$ based on \eqref{eq:c_m}.
	
	\KwOut{$(\zeta^{\rm ul})^\star \leftarrow \frac{\zeta^{\rm ul}_{\rm up} + \zeta^{\rm ul}_{\rm low}}{2}$, optimal uplink transmit powers $\bigcup_{m \in \mathcal{M}} \{(p_k^{\rm ul})^\star\}_{k \in \mathcal{K}_m}$, and optimal receive normalizing factors $\{c_m^\star\}_{m \in \mathcal{M}}$.}
\end{algorithm}

\subsection{Computational Complexity Analysis}

According to \cite{polik2010interior, beko2012efficient}, the computational complexity of solving an SOCP problem with the interior-point method is $\mathcal{O}\left(k_{\rm soc}^{0.5} \left(m_{\rm soc}^3 + m_{\rm soc}^2 \sum_{j = 1}^{k_{\rm soc}} n_{{\rm soc}, j} + \sum_{j = 1}^{k_{\rm soc}} n_{{\rm soc}, j}^2\right)\right)$, where $k_{\rm soc}$ denotes the number of SOC constraints, $m_{\rm soc}$ the number of equality constraints, and $n_{{\rm soc}, j}$ the dimension of the $j$-th SOC.
In Algorithm \ref{alg:downlink}, we need to solve a series of SOCP problems of \eqref{p:find_q_dl}, where each SOCP problem has $k_{\rm soc} = M$, $m_{\rm soc} = 0$, and $n_{{\rm soc}, 1:k_{\rm soc}} = M + 2$, thereby requiring a computational complexity of $\mathcal{O}\left(M^{3.5}\right)$.
In addition, as the bisection search method generally takes $\mathcal{O}\left(\rm{log}\left(\left(\zeta^{\rm dl}_{\rm up} - \zeta^{\rm dl}_{\rm low}\right) / \epsilon^{\rm dl}\right)\right)$ iterations, the computational complexity of the downlink optimization is $\mathcal{O}\left({\rm log}\left(\left(\zeta^{\rm dl}_{\rm up} - \zeta^{\rm dl}_{\rm low}\right) / \epsilon^{\rm dl}\right) M^{3.5}\right)$.
Similarly, the computational complexity of Algorithm \ref{alg:uplink} is $\mathcal{O}\left({\rm log}\left(\left(\zeta^{\rm ul}_{\rm up} - \zeta^{\rm ul}_{\rm low}\right) / \epsilon^{\rm ul}\right) M^{1.5} K_{\rm tot}^2\right)$.

\subsection{Discussion on Implementation Issues}

\begin{figure*}[t]
	\centering
	\includegraphics[scale=0.55]{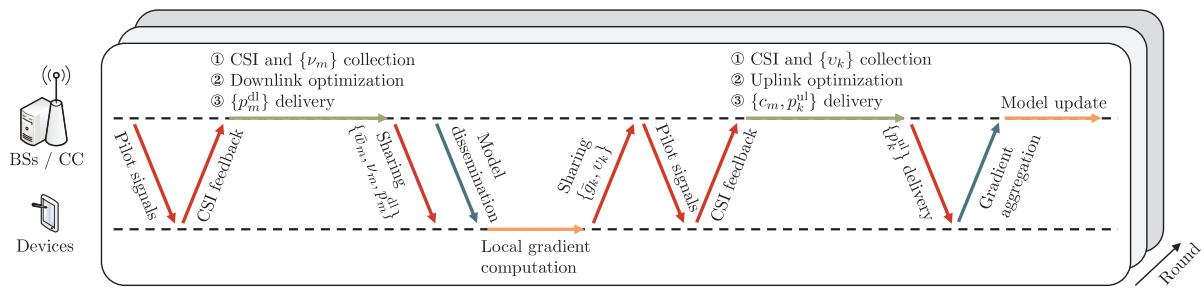}
	\caption{Communication protocol within a typical training round of our proposed over-the-air FL in a multi-cell wireless network.} 
	\label{fig:protocol}
\end{figure*}

To implement the proposed cooperative multi-cell FL optimization framework, the communication protocol illustrated in Fig. \ref{fig:protocol} can be adopted for each training round.

Before the downlink model dissemination, each BS first broadcasts orthogonal pilot signals to the devices.
Each device estimates and feeds back the channel conditions between itself and the BSs to its home BS, e.g., device $k \in \mathcal{K}_m$ feeds back $h_k^{\rm dl}$ and $\{h_{l, k}^{\rm dl}\}_{l \in \mathcal{M} \setminus \{m\}}$ to BS $m$.
Since the BSs are generally connected through high-rate wired backhaul links, one of the BSs can be designated as a centralized controller (CC) \cite{zarikoff2010coordinated} for collecting total downlink CSI $\bigcup_{m \in \mathcal{M}} \{h_k^{\rm dl}, \{h_{l, k}^{\rm dl}\}_{l \in \mathcal{M} \setminus \{m\}}\}_{k \in \mathcal{K}_m}$ and standard deviations $\{\nu_m\}_{m \in \mathcal{M}}$ of the model parameters from other BSs for the cooperative downlink optimization.
After that, the CC delivers the setting of downlink transmit power $\{p_m^{\rm dl}\}_{m \in \mathcal{M}}$ to the corresponding BSs, and then BS $m$ shares model statistics $\{\bar{w}_m, \nu_m\}$ and power setting $p_m^{\rm dl}$ to its associated devices for model recovering as in \eqref{eq:r_dl_k_d}.
Consequently, each device receives a noisy version of the model parameter disseminated by its home BS, and accordingly computes the local gradient for the next model update.

After finishing the local gradient computation, the devices located in cell $m$ first upload their gradient statistics, $\{\bar{g}_k, \upsilon_k\}_{k \in \mathcal{K}_m}$, to BS $m$, which will be utilized for the cooperative uplink optimization and the gradient aggregation as shown in \eqref{eq:r_ul_m_d}.
Then, the CC leverages the same method as in the downlink transmission to collect the total uplink CSI, $\bigcup_{m \in \mathcal{M}} \{h_k^{\rm ul}, \{h_{k, l}^{\rm ul}\}_{l \in \mathcal{M} \setminus \{m\}}\}_{k \in \mathcal{K}_m}$, by exploiting the channel reciprocity, and to gather standard deviations $\bigcup_{m \in \mathcal{M}} \{\upsilon_k\}_{k \in \mathcal{K}_m}$ of the local gradients, based on which the cooperative uplink optimization can be completed.
Subsequently, optimized receive normalizing factors $\{c_m\}_{m \in \mathcal{M}}$ and uplink transmit power $\bigcup_{m \in \mathcal{M}} \{p_k^{\rm ul}\}_{k \in \mathcal{K}_m}$ can be assigned to each BS and its associated devices, respectively.
Finally, each BS aggregates the local gradients of its associated devices as in \eqref{eq:r_ul_m_d}, and performs the global model update as in \eqref{eq:global_update}.
Note that, if the CSI remains invariant within each training round, the overhead for uplink channel estimation can be eliminated.

Since the model/gradient parameters that need to be transmitted are typically high-dimensional during the training process, the above exchanged scalars are relatively small in terms of the packet size and hence are assumed to be transmitted in an error-free manner with negligible overhead \cite{liu2021reconfigurable}.

\begin{remark}
	In addition to the centralized optimization method assisted with a CC, interference management can be achieved by the distributed optimization method, which leverages the interference temperature (IT) technique to limit the interference from other cells \cite{zhang2010cooperative, cao2021cooperative}.
	Specifically, the IT technique allows the inter-cell interference term to be replaced by a constant, namely IT level, which enables each cell to optimize its transmission regardless of the transmission of other cells.
	The IT constraints should be considered in the optimization to ensure that the inter-cell interference does not exceed the IT level.
	Meanwhile, each BS merely needs to obtain the CSI of its home cell, which reduces the overhead for collecting the CSI of other cells.
	The distributed optimization algorithm is left for our future work.
\end{remark}

\section{Simulation Results} \label{Sec:Sim}

In this section, we present the numerical results to evaluate the performance of our proposed cooperative multi-cell FL optimization framework, based on computer simulations.

\subsection{Simulation Setup}

We consider a four-cell wireless network, where the BSs are located at $\left(0, 0\right)$, $\left(40, 0\right)$, $\left(20, 20 \sqrt{3}\right)$, and $\left(20, - 20 \sqrt{3}\right)$ meters, respectively.
The devices in each cell are uniformly and randomly distributed in a circle of radius $\varrho \in \left[1, 20\right]$ meters centered at their home BS.
The number of devices located in each cell is set to $K_m = \bar{K} = 10$, $\forall \, m \in \mathcal{M}$.
All channel coefficients are modeled as \cite{tse2005fundamentals}
\begin{align}
	h = \rho^{- \alpha / 2} \left(\sqrt{\frac{\beta}{1 + \beta}} h_{\rm LoS} + \sqrt{\frac{1}{1 + \beta}} h_{\rm NLoS}\right)
\end{align}
and vary independently over different rounds, where $\rho$ denotes the distance between the transmitter and the receiver, $\alpha = 2.5$ denotes the pathloss exponent, $\beta = 5$ (dB) represents the Rician factor, $h_{\rm LoS}$ denotes the line-of-sight (LoS) component, and $h_{\rm NLoS} \sim \mathcal{CN}(0, 1)$ denotes the non-line-of-sight (NLoS) exponent.
Considering heterogeneity of different BSs and devices, the maximum downlink transmit power budgets of four BSs are set to $P_1^{\rm dl} = 40$ dBm, $P_2^{\rm dl} = 30$ dBm, $P_3^{\rm dl} = 30$ dBm, and $P_4^{\rm dl} = 40$ dBm, respectively.
The maximum transmit power budgets of devices in each cell are set to $P_k^{\rm ul} = 15$ dBm, $\forall \, k \in \{\sum_{l = 1}^{m - 1} K_l + 1, \dots, \sum_{l = 1}^{m - 1} K_l + \lfloor\frac{\bar{K}}{2}\rfloor\}$, and $P_k^{\rm ul} = 30$ dBm, $\forall \, k \in \{\sum_{l = 1}^{m - 1} K_l + \lfloor\frac{\bar{K}}{2}\rfloor + 1, \dots, \sum_{l = 1}^{m - 1} K_l + \bar{K}\}$, $\forall \, m \in \mathcal{M}$,
where $\lfloor x \rfloor$ is the floor function that returns the greatest integer less than or equal to $x \in \mathbb{R}$.
In addition, the noise power are set to $(\sigma_m^{\rm ul})^2 = (\sigma_k^{\rm dl})^2 = -110$ dBm, $\forall \, k \in \mathcal{K}_m$, $\forall \, m \in \mathcal{M}$, and the solution precision is set to $\epsilon^{\rm dl} = \epsilon^{\rm ul} = 10^{-9}$.
All simulation results in the following are obtained by averaging over $100$ experiments.

\subsubsection{Learning Model Setting}

We leverage the multinomial logistic regression to train the learning models with the following specific settings.

\begin{itemize}
\item \textbf{Sample-Wise Loss Function}:
Suppose the labeled data sample can be represented as $\bm{\xi} = [\bm{u}; v]$, where $\bm{u}$ denotes the data feature and $v$ denotes the ground-truth label of $\bm{u}$.
The sample-wise loss function used for the training is given by \cite{jurafsky2009speech}
\begin{align}
	f(\bm{w}; \bm{\xi}) = - \sum_{c = 1}^C \mathbb{I}_{\{v = c\}} {\rm log} \left(\frac{\exp\left({\bm{w}_c^{\sf T} \bm{u}}\right)}{\sum_{j = 1}^C \exp\left({\bm{w}_j^{\sf T} \bm{u}}\right)}\right)
\end{align}
where $C$ denotes the total number of label categories, the model parameter $\bm{w}$ consists of the parameter for each label category, i.e., $\bm{w} = [\bm{w}_1^{\sf T}, \bm{w}_2^{\sf T}, \dots, \bm{w}_C^{\sf T}]^{\sf T}$, and $\mathbb{I}_{\{v = c\}}$ is an indicator function defined as
\begin{align}
	\mathbb{I}_{\{v = c\}} = \left\{\begin{array}{ll}
		1, & v = c \\
		0, & v \ne c.
	\end{array}\right.
\end{align}
The partial gradient with respect to $\bm{w}_c$ is
\begin{align}
	\nabla_{\bm{w}_c} f(\bm{w}; \bm{\xi}) = - \left(\mathbb{I}_{\{v = c\}} - \frac{\exp(\bm{w}_c^{\sf T} \bm{u})}{\sum_{j = 1}^C \exp(\bm{w}_j^{\sf T} \bm{u})}\right) \bm{u}
\end{align}
and the entire gradient can be expressed as $\nabla f(\bm{w}; \bm{\xi}) = [\nabla_{\bm{w}_1} f(\bm{w}; \bm{\xi})^{\sf T}, \nabla_{\bm{w}_2} f(\bm{w}; \bm{\xi})^{\sf T}, \dots, \nabla_{\bm{w}_C} f(\bm{w}; \bm{\xi})^{\sf T}]^{\sf T}$.

\item \textbf{Dataset}:
The MNIST \cite{lecun1998gradient} and Fashion-MNIST \cite{xiao2017fashion} datasets are used in the considered multi-cell FL system, where different cells perform different learning tasks based on their assigned datasets.
Specifically, the data labeled with $0 \sim 4$ in the MNIST dataset are assigned to cell $1$, the data labeled with $5 \sim 9$ in the MNIST dataset are assigned to cell $2$, the data labeled with $0 \sim 4$ in the Fashion-MNIST dataset are assigned to cell $3$, and the data labeled with $5 \sim 9$ in the Fashion-MNIST dataset are assigned to cell $4$.
In each cell, we first sort the dataset by the contained labels, then divide it into $\bar{K}$ shards, and finally assign one shard for each device without replacement.

\item \textbf{Learning Rate}:
The learning rates of four cells are set to $\eta_1 = \eta_2 = 0.1$ and $\eta_3 = \eta_4 = 0.01$.

\end{itemize}

\subsubsection{Baseline Schemes}

We consider the following transmission schemes for comparison.

\begin{itemize}
	\item \textbf{Benchmark / DL-Free / UL-Free}:
	The case that both the downlink model dissemination and uplink gradient aggregation are realized in an error-free manner serves as the \textit{Benchmark} scheme.
	This scheme achieves the best learning performance by assuming the accurate model/gradient exchange.
	The error-free downlink and uplink transmissions are referred to as \textit{DL-Free} and \textit{UL-Free}, respectively.
	
	\item \textbf{DL-Opt / UL-Opt}:
	\textit{DL-Opt} denotes the strategy that performs the resource allocation for the downlink transmission using Algorithm \ref{alg:downlink}, and \textit{UL-Opt} denotes the strategy that performs the resource allocation for the uplink transmission using Algorithm \ref{alg:uplink}.
	These schemes are used to evaluate the learning performance of our proposed cooperative multi-cell FL optimization framework.
	
	\item \textbf{UL-IgnInter}:
	In this case, each cell independently optimizes its uplink transmission by ignoring the inter-cell interference.
	Hence, the optimization problem for cell $m$ is given by
	\begin{subequations} \label{p:ul_igninter}
		\begin{align}
			\underset{c_m, \{p_k^{\rm ul}\}}{\text{minimize}} \quad & \sum_{k \in \mathcal{K}_m} \left(\frac{|h_k^{\rm ul}| \sqrt{p_k^{\rm ul}}}{\sqrt{c_m}} - \upsilon_k\right)^2 + \frac{(\sigma_m^{\rm ul})^2}{2 c_m} \\*
			\text{subject to} \quad & 0 \le p_k^{\rm ul} \le P_k^{\rm ul}, \, \forall \, k \in \mathcal{K}_m, \,  c_m > 0
		\end{align}
	\end{subequations}
	which can be solved by using the method proposed in \cite{cao2020optimized}.
	This scenario reduces to the uplink transmission design for single-cell wireless networks.
	
	\item \textbf{UL-MaxInter}:
	In this case, each cell optimizes its uplink transmission by assuming the existence of the maximum inter-cell interference from other cells.
	Hence, the optimization problem for cell $m$ is given by
	\begin{subequations}
		\begin{align}
			\underset{c_m, \{p_k^{\rm ul}\}}{\text{minimize}} \quad & \sum_{k \in \mathcal{K}_m} \left(\frac{|h_k^{\rm ul}| \sqrt{p_k^{\rm ul}}}{\sqrt{c_m}} - \upsilon_k\right)^2 + \frac{(\sigma_m^{\rm ul})^2}{2 c_m} \notag \\*
			& + \sum_{l \in \mathcal{M} \setminus \{m\}} \sum_{k' \in \mathcal{K}_l} \frac{\Re\left\{h_{k', m}^{\rm ul} (h_{k'}^{\rm ul})^\dagger\right\}^2 P_{k'}^{\rm ul}}{|h_{k'}^{\rm ul}|^2 c_m} \\*
			\text{subject to} \quad & 0 \le p_k^{\rm ul} \le P_k^{\rm ul}, \, \forall \, k \in \mathcal{K}_m, \, c_m > 0
		\end{align}
	\end{subequations}
	which can be solved by using the method proposed in \cite{cao2020optimized}.
	This situation is similar to \textit{UL-IgnInter} that focuses on single-cell transmission design but considering the worst-case inter-cell interference.
	
	\item \textbf{DL-Full / UL-Full}:
	The full-power transmission is applied to the system by setting $p_m^{\rm dl} = P_m^{\rm dl}$, $\forall \, m \in \mathcal{M}$, in the \textit{DL-Full} scheme, and setting $p_k^{\rm ul} = P_k^{\rm ul}$, $\forall \, k \in \mathcal{K}_m$, $\forall \, m \in \mathcal{M}$, in the \textit{UL-Full} scheme, respectively.
\end{itemize}

\subsection{Performance Evaluation}

\subsubsection{Two-Cell Network}

We first evaluate the learning performance of our proposed cooperative multi-cell FL optimization framework in a two-cell network, i.e., only cell $1$ and cell $2$ are active, with profiling vector $\bm{\kappa} = \left[\frac{1}{2}, \frac{1}{2}\right]$.

\begin{figure}[t]
	\centering
	\subfigure[Training loss versus the number of rounds.]{
		\label{fig:dl_loss}
		\centering
		\includegraphics[scale=0.5]{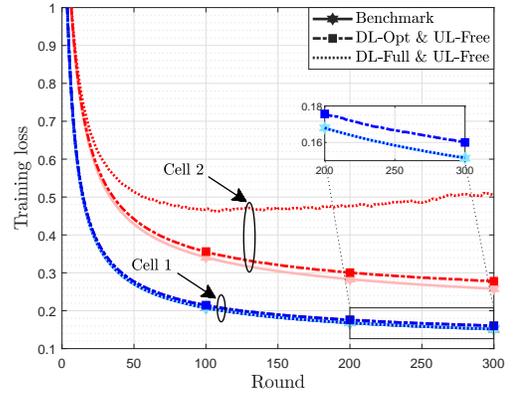}
	}
	\hspace{3mm}
	\subfigure[Test accuracy versus the number of rounds.]{
		\label{fig:dl_acc}
		\centering
		\includegraphics[scale=0.5]{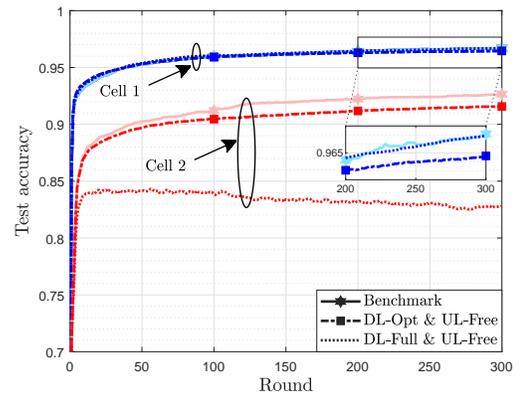}
	}
	\caption{Learning performance comparison for different downlink transmission schemes when the uplink gradient aggregation is error-free.} 
\end{figure}

Figs. \ref{fig:dl_loss} and \ref{fig:dl_acc} show the learning performance versus the number of rounds for different downlink transmission schemes when the uplink gradient aggregation is error-free.
It is observed that the training loss and test accuracy obtained by \textit{DL-Opt} are close to those obtained by \textit{Benchmark} in both two cells, which reveals that our proposed cooperative downlink transmission is able to properly balance the downlink dissemination errors between two cells while ensuring learning performance.
Meanwhile, \textit{DL-Full} yields almost the same training loss and test accuracy as \textit{Benchmark} in cell $1$, but has larger performance gaps in comparison with \textit{Benchmark} in cell $2$.
This indicates that the inter-cell interference from cell $1$ severely deteriorates the downlink transmission in cell $2$ due to the lack of interference management.
The magnitude of downlink dissemination errors at each device increases with the transmission power of its non-associated BSs and decreases with that of its home BS according to \eqref{eq:r_dl_k_d}.
As the maximum downlink transmit power budget of cell $1$ is larger than that of cell $2$, simply implementing the downlink full-power transmission leads to poor inter-cell interference suppression for devices in cell $2$, thereby reducing the accuracy of the received global model at each device.
In addition, the training loss of cell $2$ obtained by \textit{DL-Full} shows a decreasing trend in early rounds but increases later, while the test accuracy first increases but subsequently decreases.
This demonstrates that inaccurate downlink model dissemination leads to worse performance as the number of rounds increases.

\begin{figure}[t]
	\centering
	\subfigure[Training loss versus the number of rounds.]{
		\label{fig:ul_loss}
		\centering
		\includegraphics[scale=0.5]{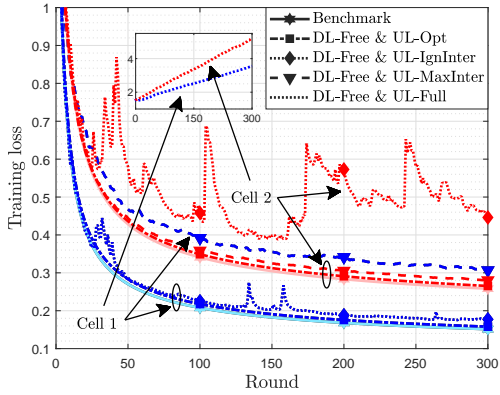}
	}
	\hspace{3mm}
	\subfigure[Test accuracy versus the number of rounds.]{
		\label{fig:ul_acc}
		\centering
		\includegraphics[scale=0.5]{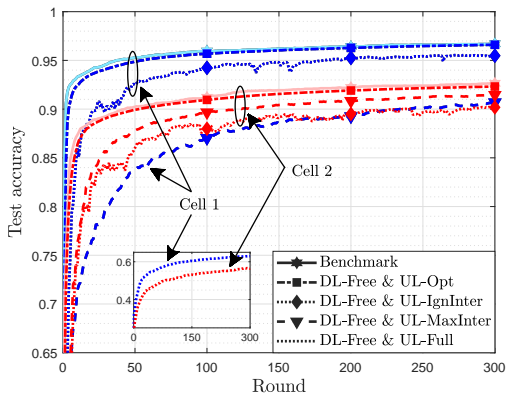}
	}
	\caption{Learning performance comparison for different uplink transmission schemes when the downlink model dissemination is error-free.} 
\end{figure}

\begin{figure}[t]
	\centering
	\subfigure[Achievable region of training loss.]{
		\label{fig:pareto_loss}
		\centering
		\includegraphics[scale=0.5]{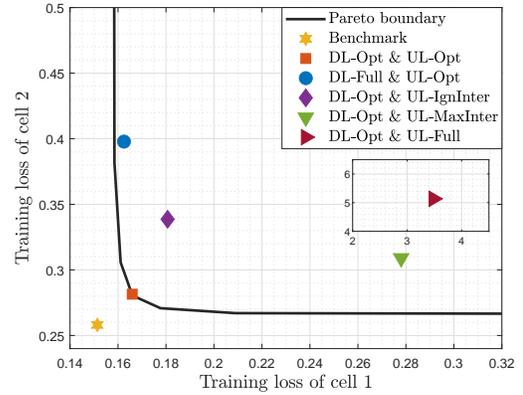}
	}
	\hspace{3mm}
	\subfigure[Achievable region of test accuracy.]{
		\label{fig:pareto_acc}
		\centering
		\includegraphics[scale=0.5]{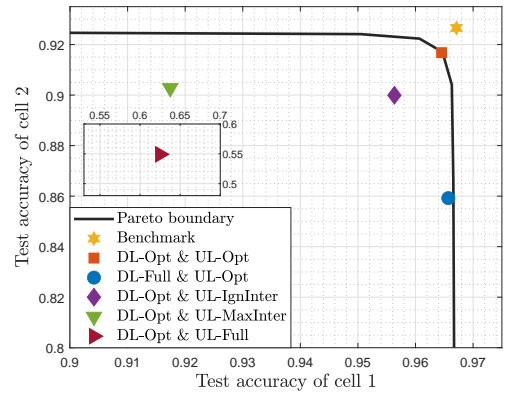}
	}
	\caption{Achievable region of learning performance when $T = 300$.}
	\label{fig:pareto_region}
\end{figure}

Figs. \ref{fig:ul_loss} and \ref{fig:ul_acc} show the learning performance versus the number of rounds for different uplink transmission schemes when the downlink model dissemination is error-free.
It is observed that \textit{UL-Opt} performs similarly to \textit{Benchmark} in terms of the training loss and test accuracy, which demonstrates that our proposed cooperative uplink transmission is capable of balancing the uplink aggregation errors between two cells, while keeping near-optimal learning performance.
Different from the downlink model dissemination, the transmission design in the uplink gradient aggregation needs to achieve the magnitude alignment at the receiver while suppressing the inter-cell interference for accomplishing the desired function computation \cite{zhu2021aircomp}.
Since \textit{UL-IgnInter} ignores the inter-cell interference and \textit{UL-MaxInter} assumes a maximum inter-cell interference, they achieve less accurate power control than \textit{UL-Opt}, which results in larger distortion of the aggregated local gradients at the BS and in turn leads to worse learning performance in both cells.
Due to the negligence of inter-cell interference, \textit{UL-IgnInter} suffers from severe inter-cell interference during the training process.
As channel conditions independently change over different communication rounds, the inter-cell interference may vary dramatically in different rounds, which leads to large fluctuations in the training loss of \textit{UL-IgnInter}.
Meanwhile, since the worst effect of inter-cell interference is considered, \textit{UL-MaxInter} yields a relatively smooth curve as compared with \textit{UL-IgnInter}.
Due to the lack of power control, the magnitude alignment of uplink local gradients achieved by \textit{UL-Full} is completely dependent on the receiver noise, channel fading, and inter-cell interference, which leads to much worse learning performance than other baseline schemes.
Meanwhile, due to the randomness of the receiver noise and channel fading, the FL training using \textit{UL-Full} is not guaranteed to converge.
This results in that some experiments yield an increasing training loss and a non-increasing test accuracy, while others yield slightly decreasing training loss and increasing test accuracy.
Consequently, by averaging the simulation results over $100$ experiments, the training loss of \textit{UL-Full} shown in Fig. \ref{fig:ul_loss} increases, instead of decreasing as other baseline schemes, while the test accuracy of \textit{UL-Full} shown in Fig. \ref{fig:ul_acc} can still increase.

Figs. \ref{fig:pareto_loss} and \ref{fig:pareto_acc} show the achievable region of learning performance when $T = 300$, where the achievable region represents the learning performance that can be simultaneously achieved by all cells under a given set of downlink and uplink transmit power constraints for BSs and devices, respectively.
The achievable regions of the training loss and test accuracy are surrounded by the Pareto boundary, where the Pareto boundary is drawn through \textit{DL-Opt \& UL-Opt} by setting the profiling vector as $\bm{\kappa}_{\rm p} = [\bar{\kappa}, 1 - \bar{\kappa}]$, where $\bar{\kappa} \in \{0.0001, 0.001, 0.1, 0.5, 0.9, 0.99, 0.9999\}$.
It is observed that all the baseline schemes that consider unreliable wireless communications lie within the achievable regions of training loss and test accuracy.
The \textit{DL-Opt \& UL-Opt} achieves the learning performance closest to \textit{Benchmark} as compared with other baseline schemes that lack interference management in either downlink or uplink transmission.
The results indicate that the interference management is required for both the downlink and uplink transmissions to balance the FL performance among multiple cells.

\subsubsection{Multi-Cell Network}

We evaluate the average learning performance over multiple cells of our proposed cooperative multi-cell FL optimization framework in multi-cell wireless networks, where cells $1 \sim M$ are active and the profiling vector is set to $\bm{\kappa} = \left[\frac{1}{M}, \frac{1}{M}, \dots, \frac{1}{M}\right]$.

\begin{figure}[t]
	\centering
	\subfigure[Average learning performance in a three-cell network.]{
		\label{fig:3cell}
		\centering
		\includegraphics[scale=0.5]{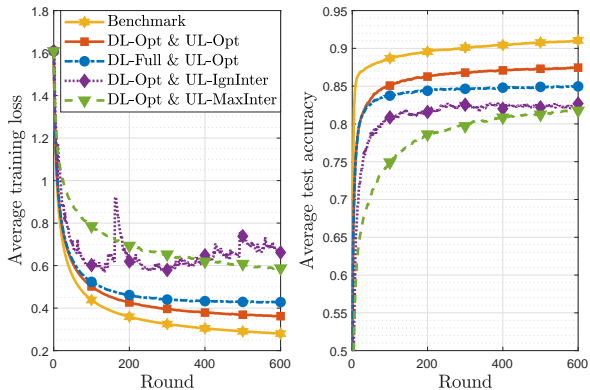}
	}
	\subfigure[Average learning performance in a four-cell network.]{
		\label{fig:4cell}
		\centering
		\includegraphics[scale=0.5]{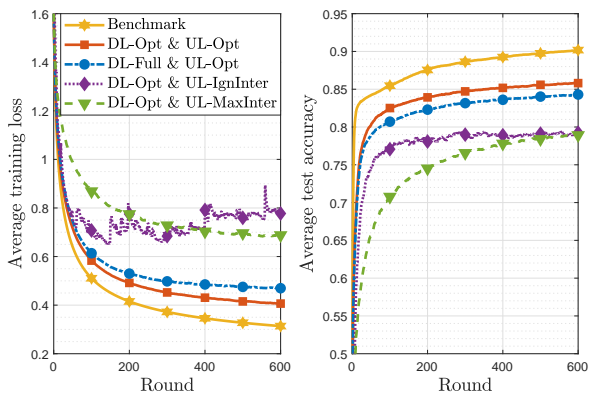}
	}
	\caption{Average learning performance comparison in multi-cell wireless networks with different transmission schemes.} 
	\label{fig:multiple_cells}
\end{figure}

Figs. \ref{fig:3cell} and \ref{fig:4cell} show the average learning performance versus the number of rounds under different transmission schemes in three-cell and four-cell networks, respectively.
It is observed that \textit{DL-Opt \& UL-Opt} outperforms other baseline schemes that lack interference management in either downlink or uplink transmission in terms of the average training loss and test accuracy.
This demonstrates that our proposed cooperative transmission design is effective in a multi-cell wireless network and can balance the learning performance.
As the number of rounds increases, the learning performance of \textit{DL-Opt \& UL-MaxInter} gradually approaches and even exceeds the learning performance of \textit{DL-Opt \& UL-IgnInter}.
Hence, the inter-cell interference gradually becomes a major factor limiting the performance improvement, which demonstrates the importance of interference management in multi-cell wireless networks.
Also, since the performance of \textit{DL-Opt \& UL-IgnInter} is mainly limited by the uplink transmission, a phenomenon similar to \textit{DL-Free \& UL-IgnInter} is also observed in Fig. \ref{fig:multiple_cells}, i.e., ignoring the inter-cell interference in the upink transmission leads to relatively large fluctuations in the training loss.

\section{Conclusions} \label{Sec:Conc}

In this paper, we consider over-the-air FL in a multi-cell wireless network, where each cell performs a different FL task.
To quantify the learning performance, we first conduct the convergence analysis of AirComp-assisted FL systems, which reveals that the distorted downlink and uplink communications result in a gap that hinders the convergence of FL.
We then characterize the Pareto boundary of the gap region via the profiling technique and further formulate an optimization problem to minimize the sum of error-induced gaps for all cells.
Subsequently, we propose a cooperative multi-cell FL optimization framework to achieve efficient interference management in the downlink and uplink transmissions.
Benefiting from the coordination among multiple cells, our proposed algorithm is able to achieve much better average learning performance than non-cooperative baseline schemes.
 
\appendix[Proof of Theorem \ref{theorem:convergence}] \label{appendix:convergence}

For presentation clarity, we omit the cell index in the following analysis.
According to \eqref{eq:dl_w}, \eqref{eq:ul_g}, and \eqref{eq:global_update}, the single-round change of the global model is given by
\begin{align} \label{eq:w_diff}
	& \bm{w}^{t + 1} - \bm{w}^t = - \frac{\eta^{t + 1}}{K} \left(\sum_{k \in \mathcal{K}} \bm{g}_k^{t + 1} + \underset{\bm{e}_{\rm u}^{t + 1}}{\underbrace{\Re\left\{(\bm{e}^{\rm ul})^{t + 1}\right\}}}\right) \notag \\
	& = - \frac{\eta^{t + 1}}{K} \left(\sum_{k \in \mathcal{K}} \nabla F_k(\bm{w}^t + \underset{\bm{e}_{{\rm d}, k}^{t + 1}}{\underbrace{\Re\left\{(\bm{e}_k^{\rm dl})^{t + 1}\right\}}}) + \bm{e}_{\rm u}^{t + 1}\right).
\end{align}

Based on \eqref{eq:global_loss}, \eqref{eq:smooth}, and \eqref{eq:w_diff}, we have
\begin{align} \label{eq:F_diff}
	& F(\bm{w}^{t + 1}) - F(\bm{w}^t) \notag \\
	& \le \left<\nabla F(\bm{w}^t), \bm{w}^{t + 1} - \bm{w}^t\right> + \frac{L}{2} \left\|\bm{w}^{t + 1} - \bm{w}^t\right\|^2 \notag \\
	& = - \eta^{t + 1} \left<\nabla F(\bm{w}^t), \frac{1}{K} \sum_{k \in \mathcal{K}} \nabla F_k(\bm{w}^t + \bm{e}_{{\rm d}, k}^{t + 1})\right> \notag \\
	& \quad \underset{A_1}{\underbrace{- \frac{\eta^{t + 1}}{K} \left<\nabla F(\bm{w}^t), \bm{e}_{\rm u}^{t + 1}\right>}} \notag \\
	& \quad + \frac{L (\eta^{t + 1})^2}{2}\underset{A_2}{\underbrace{\left\|\frac{1}{K} \sum_{k \in \mathcal{K}} \nabla F_k(\bm{w}^t + \bm{e}_{{\rm d}, k}^{t + 1}) + \frac{1}{K} \bm{e}_{\rm u}^{t + 1}\right\|^2}}.
\end{align}
We then rewrite $A_2$ as
\begin{align} \label{eq:A_2}
	& A_2 = \left\|\frac{1}{K} \sum_{k \in \mathcal{K}} \nabla F_k(\bm{w}^t + \bm{e}_{{\rm d}, k}^{t + 1}) + \frac{1}{K} \bm{e}_{\rm u}^{t + 1}\right\|^2 \notag \\
	& = \left\|\frac{1}{K} \sum_{k \in \mathcal{K}} \nabla F_k(\bm{w}^t + \bm{e}_{{\rm d}, k}^{t + 1})\right\|^2 + \frac{1}{K^2} \left\|\bm{e}_{\rm u}^{t + 1}\right\|^2 \notag \\
	& \quad + 2 \left<\frac{1}{K} \sum_{k \in \mathcal{K}} \nabla F_k(\bm{w}^t + \bm{e}_{{\rm d}, k}^{t + 1}), \frac{1}{K} \bm{e}_{\rm u}^{t + 1}\right> \notag \\
	& \overset{\diamondsuit_1}{=} \frac{1}{K^2} \left\|\bm{e}_{\rm u}^{t + 1}\right\|^2 + 2 \left<\frac{1}{K} \sum_{k \in \mathcal{K}} \nabla F_k(\bm{w}^t + \bm{e}_{{\rm d}, k}^{t + 1}), \frac{1}{K} \bm{e}_{\rm u}^{t + 1}\right> \notag \\
	& \quad + \left\|\frac{1}{K} \sum_{k \in \mathcal{K}} \nabla F_k(\bm{w}^t + \bm{e}_{{\rm d}, k}^{t + 1}) - \nabla F(\bm{w}^t)\right\|^2 - \left\|\nabla F(\bm{w}^t)\right\|^2 \notag \\
	& \quad + 2 \left<\nabla F(\bm{w}^t), \frac{1}{K} \sum_{k \in \mathcal{K}} \nabla F_k(\bm{w}^t + \bm{e}_{{\rm d}, k}^{t + 1})\right>
\end{align}
where $\diamondsuit_1$ is according to
\begin{align}
	& \left\|\frac{1}{K} \sum_{k \in \mathcal{K}} \nabla F_k(\bm{w}^t + \bm{e}_{{\rm d}, k}^{t + 1})\right\|^2 \notag \\
	& = \left\|\frac{1}{K} \sum_{k \in \mathcal{K}} \nabla F_k(\bm{w}^t + \bm{e}_{{\rm d}, k}^{t + 1}) - \nabla F(\bm{w}^t) + \nabla F(\bm{w}^t)\right\|^2 \notag \\
	& = \left\|\frac{1}{K} \sum_{k \in \mathcal{K}} \nabla F_k(\bm{w}^t + \bm{e}_{{\rm d}, k}^{t + 1}) - \nabla F(\bm{w}^t)\right\|^2 + \left\|\nabla F(\bm{w}^t)\right\|^2 \notag \\
	& \quad + 2 \left<\nabla F(\bm{w}^t), \frac{1}{K} \sum_{k \in \mathcal{K}} \nabla F_k(\bm{w}^t + \bm{e}_{{\rm d}, k}^{t + 1}) - \nabla F(\bm{w}^t)\right> \notag \\
	& = \left\|\frac{1}{K} \sum_{k \in \mathcal{K}} \nabla F_k(\bm{w}^t + \bm{e}_{{\rm d}, k}^{t + 1}) - \nabla F(\bm{w}^t)\right\|^2 - \left\|\nabla F(\bm{w}^t)\right\|^2 \notag \\
	& \quad + 2 \left<\nabla F(\bm{w}^t), \frac{1}{K} \sum_{k \in \mathcal{K}} \nabla F_k(\bm{w}^t + \bm{e}_{{\rm d}, k}^{t + 1})\right>.
\end{align}

By setting $0 < \eta^t \equiv \eta < \frac{1}{L}$ and substituting \eqref{eq:A_2} into \eqref{eq:F_diff}, we have
\begin{align} \label{eq:F_diff_2}
	& F(\bm{w}^{t + 1}) - F(\bm{w}^t) \notag \\
	& \le \underset{A_3}{\underbrace{A_1 + L \eta^2 \left<\frac{1}{K} \sum_{k \in \mathcal{K}} \nabla F_k(\bm{w}^t + \bm{e}_{{\rm d}, k}^{t + 1}), \frac{1}{K} \bm{e}_{\rm u}^{t + 1}\right>}} \notag \\
	& \quad + \frac{L \eta^2}{2 K^2} \left\|\bm{e}_{\rm u}^{t + 1}\right\|^2 - \frac{L \eta^2}{2} \left\|\nabla F(\bm{w}^t)\right\|^2 \notag \\
	& \quad + \frac{L \eta^2}{2} \left\|\frac{1}{K} \sum_{k \in \mathcal{K}} \nabla F_k(\bm{w}^t + \bm{e}_{{\rm d}, k}^{t + 1}) - \nabla F(\bm{w}^t)\right\|^2 \notag \\
	& \quad - \eta \left(1 - L \eta\right) \left<\nabla F(\bm{w}^t), \frac{1}{K} \sum_{k \in \mathcal{K}} \nabla F_k(\bm{w}^t + \bm{e}_{{\rm d}, k}^{t + 1})\right> \notag \\
	& \overset{\diamondsuit_2}{=} A_3 + \frac{L \eta^2}{2 K^2} \left\|\bm{e}_{\rm u}^{t + 1}\right\|^2 - \eta \left(1 - L \eta\right) \notag \\
	& \quad \times \left<\nabla F(\bm{w}^t), \frac{1}{K} \sum_{k \in \mathcal{K}} \nabla F_k(\bm{w}^t + \bm{e}_{{\rm d}, k}^{t + 1}) - \nabla F(\bm{w}^t)\right> \notag \\
	& \quad - \eta\left(1 - \frac{L \eta}{2}\right) \left\|\nabla F(\bm{w}^t)\right\|^2 \notag \\
	& \quad + \frac{L \eta^2}{2} \left\|\frac{1}{K} \sum_{k \in \mathcal{K}} \nabla F_k(\bm{w}^t + \bm{e}_{{\rm d}, k}^{t + 1}) - \nabla F(\bm{w}^t)\right\|^2 \notag \\
	& \overset{\diamondsuit_3}{\le} A_3 + \frac{L \eta^2}{2 K^2} \left\|\bm{e}_{\rm u}^{t + 1}\right\|^2 - \frac{\eta}{2} \left\|\nabla F(\bm{w}^t)\right\|^2 \notag \\
	& \quad + \frac{\eta}{2} \left\|\frac{1}{K} \sum_{k \in \mathcal{K}} \nabla F_k(\bm{w}^t + \bm{e}_{{\rm d}, k}^{t + 1}) - \nabla F(\bm{w}^t)\right\|^2
\end{align}
where $\diamondsuit_2$ holds by setting
\begin{align}
	& \left<\nabla F(\bm{w}^t), \frac{1}{K} \sum_{k \in \mathcal{K}} \nabla F_k(\bm{w}^t + \bm{e}_{{\rm d}, k}^{t + 1})\right> \notag \\
	& = \left<\nabla F(\bm{w}^t), \nabla F(\bm{w}^t) \right. \notag \\
	& \quad + \left. \frac{1}{K} \sum_{k \in \mathcal{K}} \nabla F_k(\bm{w}^t + \bm{e}_{{\rm d}, k}^{t + 1}) - \nabla F(\bm{w}^t)\right>
\end{align}
and $\diamondsuit_3$ follows from
\begin{align}
	& - 2 \left<\nabla F(\bm{w}^t), \frac{1}{K} \sum_{k \in \mathcal{K}} \nabla F_k(\bm{w}^t + \bm{e}_{{\rm d}, k}^{t + 1}) - \nabla F(\bm{w}^t)\right> \notag \\
	& \le \left\|\nabla F(\bm{w}^t)\right\|^2 + \left\|\frac{1}{K} \sum_{k \in \mathcal{K}} \nabla F_k(\bm{w}^t + \bm{e}_{{\rm d}, k}^{t + 1}) - \nabla F(\bm{w}^t)\right\|^2.
\end{align}

Since model/gradient parameter $\bm{\theta}^t \in \{\bm{w}_m^t, \bm{g}^t_k\}$ at the $t$-th round is determined by the normalized uplink/downlink transmit symbols, $\mathcal{S} = \bigcup_{i = 1}^{t - 1} \left\{(\bm{s}_m^{\rm dl})^i, (\bm{s}_k^{\rm ul})^i\right\}$, and the receiver noises, $\mathcal{Z} = \bigcup_{i = 1}^{t - 1} \left\{(\bm{z}_k^{\rm dl})^i, (\bm{z}_m^{\rm ul})^i\right\}$, in the previous $(t - 1)$ rounds, the total expectation of $\bm{\theta}^t$ and $F(\bm{\theta}^t)$ for any $t \in \mathcal{T}$ can be represented as $\mathbb{E}\left[\bm{\theta}^t\right] = \mathbb{E}_\mathcal{S} \mathbb{E}_\mathcal{Z}\left[\bm{\theta}^t\right]$ and $\mathbb{E}\left[F(\bm{\theta}^t)\right] = \mathbb{E}_\mathcal{S} \mathbb{E}_\mathcal{Z}\left[F(\bm{\theta}^t)\right]$, respectively \cite{friedlander2012hybrid}.
In addition, since $\mathbb{E}\left[\bm{\phi}\right] = \bm{0}$, $\forall \, \bm{\phi} \in \mathcal{S} \cup \mathcal{Z}$, we have
$\mathbb{E}\left[\bm{e}_{\rm u}^t\right] = \bm{0}$ and $\mathbb{E}\left[\bm{e}_{{\rm d}, k}^t\right] = \bm{0}$, $\forall \, k \in \mathcal{K}$, which implies that $\mathbb{E}\left[A_3\right] = 0$ for $A_3$ given in \eqref{eq:F_diff_2}.

By summing both sides of \eqref{eq:F_diff_2} for $T$ rounds and taking the total expectation, based on Assumption \ref{assump:lower_bound}, we have
\begin{align} \label{eq:F_diff_3}
	& F(\bm{w}^\star) - F(\bm{w}^0) \le \mathbb{E}\left[F(\bm{w}^T)\right] - F(\bm{w}^0) \notag \\
	& \le - \frac{\eta}{2} \sum_{t = 0}^{T - 1} \mathbb{E}\left[\left\|\nabla F(\bm{w}^t)\right\|^2\right] + \frac{L \eta^2}{2 K^2} \sum_{t = 0}^{T - 1} \mathbb{E}\left[\left\|\bm{e}_{\rm u}^{t + 1}\right\|^2\right] \notag \\
	& \quad + \frac{\eta}{2} \sum_{t = 0}^{T - 1} \mathbb{E}\left[\left\|\frac{1}{K} \sum_{k \in \mathcal{K}} \nabla F_k(\bm{w}^t + \bm{e}_{{\rm d}, k}^{t + 1}) - \nabla F(\bm{w}^t)\right\|^2\right] \notag \\
	& \overset{\diamondsuit_4}{\le} - \frac{\eta}{2} \sum_{t = 0}^{T - 1} \mathbb{E}\left[\left\|\nabla F(\bm{w}^t)\right\|^2\right] + \frac{L \eta^2}{2 K^2} \sum_{t = 0}^{T - 1} \mathbb{E}\left[\left\|\bm{e}_{\rm u}^{t + 1}\right\|^2\right] \notag \\
	& \quad + \frac{\eta}{2 K} \sum_{t = 0}^{T - 1} \sum_{k \in \mathcal{K}} \mathbb{E}\left[\left\|\nabla F_k(\bm{w}^t + \bm{e}_{{\rm d}, k}^{t + 1}) - \nabla F_k(\bm{w}^t)\right\|^2\right] \notag \\
	& \overset{\diamondsuit_5}{\le} - \frac{\eta}{2} \sum_{t = 0}^{T - 1} \mathbb{E}\left[\left\|\nabla F(\bm{w}^t)\right\|^2\right] + \frac{L \eta^2}{2 K^2} \sum_{t = 0}^{T - 1} \mathbb{E}\left[\left\|\bm{e}_{\rm u}^{t + 1}\right\|^2\right] \notag \\
	& \quad + \frac{\eta L^2}{2 K} \sum_{t = 0}^{T - 1} \sum_{k \in \mathcal{K}} \mathbb{E}\left[\left\|\bm{e}_{{\rm d}, k}^{t + 1}\right\|^2\right]
\end{align}
where $\diamondsuit_4$ follows from the Jensen inequality and $\diamondsuit_5$ is based on \eqref{eq:lipschitz} in Assumption \ref{assump:smooth}.
Finally, dividing both sides of \eqref{eq:F_diff_3} by $T$ rounds and rearranging it yield \eqref{eq:cvg}.

\bibliographystyle{IEEEtran}
\bibliography{ref}

\begin{thebibliography}{10}
\providecommand{\url}[1]{#1}
\csname url@samestyle\endcsname
\providecommand{\newblock}{\relax}
\providecommand{\bibinfo}[2]{#2}
\providecommand{\BIBentrySTDinterwordspacing}{\spaceskip=0pt\relax}
\providecommand{\BIBentryALTinterwordstretchfactor}{4}
\providecommand{\BIBentryALTinterwordspacing}{\spaceskip=\fontdimen2\font plus
\BIBentryALTinterwordstretchfactor\fontdimen3\font minus
  \fontdimen4\font\relax}
\providecommand{\BIBforeignlanguage}[2]{{%
\expandafter\ifx\csname l@#1\endcsname\relax
\typeout{** WARNING: IEEEtran.bst: No hyphenation pattern has been}%
\typeout{** loaded for the language `#1'. Using the pattern for}%
\typeout{** the default language instead.}%
\else
\language=\csname l@#1\endcsname
\fi
#2}}
\providecommand{\BIBdecl}{\relax}
\BIBdecl

\bibitem{letaief2019roadmap}
K.~B. Letaief, W.~Chen, Y.~Shi, J.~Zhang, and Y.-J.~A. Zhang, ``The roadmap to
  {6G}: {AI} empowered wireless networks,'' \emph{IEEE Commun. Mag.}, vol.~57,
  no.~8, pp. 84--90, Aug. 2019.

\bibitem{yang2020artificial}
H.~Yang, A.~Alphones, Z.~Xiong, D.~Niyato, J.~Zhao, and K.~Wu,
  ``Artificial-intelligence-enabled intelligent {6G} networks,'' \emph{IEEE
  Netw.}, Nov./Dec. 2020.

\bibitem{shi2021mobile}
Y.~Shi, K.~Yang, Z.~Yang, and Y.~Zhou, \emph{Mobile Edge Artificial
  Intelligence: Opportunities and Challenges}.\hskip 1em plus 0.5em minus
  0.4em\relax Elsevier, Aug. 2021.

\bibitem{konevcny2016federated}
\BIBentryALTinterwordspacing
J.~Kone{\v{c}}n{\`y}, H.~B. McMahan, F.~X. Yu, P.~Richt{\'a}rik, A.~T. Suresh,
  and D.~Bacon, ``Federated learning: Strategies for improving communication
  efficiency,'' 2016. [Online]. Available:
  \url{http://arxiv.org/abs/1610.05492}
\BIBentrySTDinterwordspacing

\bibitem{mcmahan2017communication}
B.~McMahan, E.~Moore, D.~Ramage, S.~Hampson, and B.~A. y~Arcas,
  ``Communication-efficient learning of deep networks from decentralized
  data,'' in \emph{Proc. Int. Conf. Artif. Intell. Stat. (AISTATS)}, Apr. 2017,
  pp. 1273--1282.

\bibitem{wang2019adaptive}
S.~Wang, T.~Tuor, T.~Salonidis, K.~K. Leung, C.~Makaya, T.~He, and K.~Chan,
  ``Adaptive federated learning in resource constrained edge computing
  systems,'' \emph{IEEE J. Sel. Areas Commun.}, vol.~37, no.~6, pp. 1205--1221,
  Jun. 2019.

\bibitem{dinh2021federated}
C.~T. Dinh, N.~H. Tran, M.~N.~H. Nguyen, C.~S. Hong, W.~Bao, A.~Y. Zomaya, and
  V.~Gramoli, ``Federated learning over wireless networks: Convergence analysis
  and resource allocation,'' \emph{IEEE/ACM Trans. Netw.}, vol.~29, no.~1, pp.
  398--409, Feb. 2021.

\bibitem{yang2020scheduling}
H.~H. Yang, Z.~Liu, T.~Q.~S. Quek, and H.~V. Poor, ``Scheduling policies for
  federated learning in wireless networks,'' \emph{IEEE Trans. Commun.},
  vol.~68, no.~1, pp. 317--333, Jan. 2020.

\bibitem{kang2020reliable}
J.~Kang, Z.~Xiong, D.~Niyato, Y.~Zou, Y.~Zhang, and M.~Guizani, ``Reliable
  federated learning for mobile networks,'' \emph{IEEE Wireless Commun.},
  vol.~27, no.~2, pp. 72--80, Apr. 2020.

\bibitem{amiri2021convergence}
M.~M. Amiri, D.~Gündüz, S.~R. Kulkarni, and H.~V. Poor, ``Convergence of
  update aware device scheduling for federated learning at the wireless edge,''
  \emph{IEEE Trans. Wireless Commun.}, vol.~20, no.~6, pp. 3643--3658, Jun.
  2021.

\bibitem{ren2020scheduling}
J.~Ren, Y.~He, D.~Wen, G.~Yu, K.~Huang, and D.~Guo, ``Scheduling for cellular
  federated edge learning with importance and channel awareness,'' \emph{IEEE
  Trans. Wireless Commun.}, vol.~19, no.~11, pp. 7690--7703, Nov. 2020.

\bibitem{chen2021joint}
M.~Chen, Z.~Yang, W.~Saad, C.~Yin, H.~V. Poor, and S.~Cui, ``A joint learning
  and communications framework for federated learning over wireless networks,''
  \emph{IEEE Trans. Wireless Commun.}, vol.~20, no.~1, pp. 269--283, Jan. 2021.

\bibitem{shi2021joint}
W.~Shi, S.~Zhou, Z.~Niu, M.~Jiang, and L.~Geng, ``Joint device scheduling and
  resource allocation for latency constrained wireless federated learning,''
  \emph{IEEE Trans. Wireless Commun.}, vol.~20, no.~1, pp. 453--467, Jan. 2021.

\bibitem{wei2022lowlatency}
K.~Wei, J.~Li, C.~Ma, M.~Ding, C.~Chen, S.~Jin, Z.~Han, and H.~V. Poor,
  ``Low-latency federated learning over wireless channels with differential
  privacy,'' \emph{IEEE J. Sel. Areas Commun.}, vol.~40, no.~1, pp. 290--307,
  Jan. 2022.

\bibitem{xu2021client}
J.~Xu and H.~Wang, ``Client selection and bandwidth allocation in wireless
  federated learning networks: A long-term perspective,'' \emph{IEEE Trans.
  Wireless Commun.}, vol.~20, no.~2, pp. 1188--1200, Feb. 2021.

\bibitem{abad2020hierarchical}
M.~S.~H. Abad, E.~Ozfatura, D.~Gündüz, and O.~Ercetin, ``Hierarchical
  federated learning across heterogeneous cellular networks,'' in \emph{Proc.
  IEEE Int. Conf. Acoust. Speech Signal Process. (ICASSP)}, 2020, pp.
  8866--8870.

\bibitem{luo2020hfel}
S.~Luo, X.~Chen, Q.~Wu, Z.~Zhou, and S.~Yu, ``{HFEL}: Joint edge association
  and resource allocation for cost-efficient hierarchical federated edge
  learning,'' \emph{IEEE Trans. Wireless Commun.}, vol.~19, no.~10, pp.
  6535--6548, May 2020.

\bibitem{lim2021dynamic}
W.~Y.~B. Lim, J.~S. Ng, Z.~Xiong, D.~Niyato, C.~Miao, and D.~I. Kim, ``Dynamic
  edge association and resource allocation in self-organizing hierarchical
  federated learning networks,'' \emph{IEEE J. Sel. Areas Commun.}, vol.~39,
  no.~12, pp. 3640--3653, Dec. 2021.

\bibitem{lim2022decentralized}
W.~Y.~B. Lim, J.~S. Ng, Z.~Xiong, J.~Jin, Y.~Zhang, D.~Niyato, C.~Leung, and
  C.~Miao, ``Decentralized edge intelligence: A dynamic resource allocation
  framework for hierarchical federated learning,'' \emph{IEEE Trans. Parallel
  Distrib. Syst.}, vol.~33, no.~3, pp. 536--550, Mar. 2022.

\bibitem{yang2020federated}
K.~{Yang}, T.~{Jiang}, Y.~{Shi}, and Z.~{Ding}, ``Federated learning via
  over-the-air computation,'' \emph{IEEE Trans. Wireless Commun.}, vol.~19,
  no.~3, pp. 2022--2035, Mar. 2020.

\bibitem{zhu2020broadband}
G.~Zhu, Y.~Wang, and K.~Huang, ``Broadband analog aggregation for low-latency
  federated edge learning,'' \emph{IEEE Trans. Wireless Commun.}, vol.~19,
  no.~1, pp. 491--506, Jan. 2020.

\bibitem{amiri2020machine}
M.~M. Amiri and D.~Gündüz, ``Machine learning at the wireless edge:
  Distributed stochastic gradient descent over-the-air,'' \emph{IEEE Trans.
  Signal Process.}, vol.~68, pp. 2155--2169, Mar. 2020.

\bibitem{fang2022communicationefficient}
\BIBentryALTinterwordspacing
W.~Fang, Z.~Yu, Y.~Jiang, Y.~Shi, C.~N. Jones, and Y.~Zhou,
  ``Communication-efficient stochastic zeroth-order optimization for federated
  learning,'' 2022. [Online]. Available: \url{http://arxiv.org/abs/2201.09531}
\BIBentrySTDinterwordspacing

\bibitem{zhu2021aircomp}
G.~Zhu, J.~Xu, K.~Huang, and S.~Cui, ``Over-the-air computing for wireless data
  aggregation in massive {IoT},'' \emph{IEEE Wireless Commun.}, vol.~28, no.~4,
  pp. 57--65, Aug. 2021.

\bibitem{wang2021wirelesspowered}
Z.~Wang, Y.~Shi, Y.~Zhou, H.~Zhou, and N.~Zhang, ``Wireless-powered
  over-the-air computation in intelligent reflecting surface-aided {IoT}
  networks,'' \emph{IEEE Internet Things J.}, vol.~8, no.~3, pp. 1585--1598,
  Feb. 2021.

\bibitem{fang2021overtheair}
W.~Fang, Y.~Jiang, Y.~Shi, Y.~Zhou, W.~Chen, and K.~B. Letaief, ``Over-the-air
  computation via reconfigurable intelligent surface,'' \emph{IEEE Trans.
  Commun.}, vol.~69, no.~12, pp. 8612--8626, Dec. 2021.

\bibitem{zhang2021gradient}
N.~Zhang and M.~Tao, ``Gradient statistics aware power control for over-the-air
  federated learning,'' \emph{IEEE Trans. Wireless Commun.}, vol.~20, no.~8,
  pp. 5115--5128, Aug. 2021.

\bibitem{cao2021optimized}
X.~Cao, G.~Zhu, J.~Xu, Z.~Wang, and S.~Cui, ``Optimized power control design
  for over-the-air federated edge learning,'' \emph{IEEE J. Sel. Areas
  Commun.}, vol.~40, no.~1, pp. 342--358, Jan. 2022.

\bibitem{yang2020fedirs}
K.~Yang, Y.~Shi, Y.~Zhou, Z.~Yang, L.~Fu, and W.~Chen, ``Federated machine
  learning for intelligent {IoT} via reconfigurable intelligent surface,''
  \emph{IEEE Netw.}, Sep./Oct. 2020.

\bibitem{liu2021reconfigurable}
H.~Liu, X.~Yuan, and Y.-J.~A. Zhang, ``Reconfigurable intelligent surface
  enabled federated learning: A unified communication-learning design
  approach,'' \emph{IEEE Trans. Wireless Commun.}, vol.~20, no.~11, pp.
  7595--7609, Nov. 2021.

\bibitem{wang2022federated}
Z.~Wang, J.~Qiu, Y.~Zhou, Y.~Shi, L.~Fu, W.~Chen, and K.~B. Letaief,
  ``Federated learning via intelligent reflecting surface,'' \emph{IEEE Trans.
  Wireless Commun.}, vol.~21, no.~2, pp. 808--822, Feb. 2022.

\bibitem{lin2022relayassisted}
Z.~Lin, H.~Liu, and Y.-J.~A. Zhang, ``Relay-assisted cooperative federated
  learning,'' \emph{IEEE Trans. Wireless Commun.}, 2022, Early Access, doi:
  \href{https://doi.org/10.1109/TWC.2022.3155596}{10.1109/TWC.2022.3155596}.

\bibitem{amiri2021noisydl}
M.~M. Amiri, D.~Gündüz, S.~R. Kulkarni, and H.~V. Poor, ``Convergence of
  federated learning over a noisy downlink,'' \emph{IEEE Trans. Wireless
  Commun.}, vol.~21, no.~3, pp. 1422--1437, Mar. 2022.

\bibitem{wei2022federated}
X.~Wei and C.~Shen, ``Federated learning over noisy channels: Convergence
  analysis and design examples,'' \emph{IEEE Trans. Cogn. Commun. Netw.}, 2022,
  Early Access, doi:
  \href{https://doi.org/10.1109/TCCN.2022.3140788}{10.1109/TCCN.2022.3140788}.

\bibitem{wang2022edge}
S.~Wang, Y.~Hong, R.~Wang, Q.~Hao, Y.-C. Wu, and D.~W.~K. Ng, ``Edge federated
  learning via unit-modulus over-the-air computation,'' \emph{IEEE Trans.
  Commun.}, 2022, Early Access, doi:
  \href{https://doi.org/10.1109/TCOMM.2022.3153488}{10.1109/TCOMM.2022.3153488}.

\bibitem{salehi2021federated}
M.~Salehi and E.~Hossain, ``Federated learning in unreliable and
  resource-constrained cellular wireless networks,'' \emph{IEEE Trans.
  Commun.}, vol.~69, no.~8, pp. 5136--5151, Aug. 2021.

\bibitem{lin2021deploying}
Z.~Lin, X.~Li, V.~K.~N. Lau, Y.~Gong, and K.~Huang, ``Deploying federated
  learning in large-scale cellular networks: Spatial convergence analysis,''
  \emph{IEEE Trans. Wireless Commun.}, vol.~21, no.~3, pp. 1542--1556, Mar.
  2022.

\bibitem{cao2021cooperative}
X.~Cao, G.~Zhu, J.~Xu, and K.~Huang, ``Cooperative interference management for
  over-the-air computation networks,'' \emph{IEEE Trans. Wireless Commun.},
  vol.~20, no.~4, pp. 2634--2651, Apr. 2021.

\bibitem{amiri2020federated}
M.~M. Amiri and D.~Gündüz, ``Federated learning over wireless fading
  channels,'' \emph{IEEE Trans. Wireless Commun.}, vol.~19, no.~5, pp.
  3546--3557, May 2020.

\bibitem{abari2015airshare}
O.~Abari, H.~Rahul, D.~Katabi, and M.~Pant, ``{AirShare}: Distributed coherent
  transmission made seamless,'' in \emph{Proc. IEEE Conf. Comput. Commun.
  (INFOCOM)}, Apr. 2015, pp. 1742--1750.

\bibitem{mahmood2019time}
A.~Mahmood, M.~I. Ashraf, M.~Gidlund, J.~Torsner, and J.~Sachs, ``Time
  synchronization in {5G} wireless edge: Requirements and solutions for
  critical-{MTC},'' \emph{IEEE Commun. Mag.}, vol.~57, no.~12, pp. 45--51, Dec.
  2019.

\bibitem{allen2018natasha}
Z.~Allen-Zhu, ``Natasha 2: Faster non-convex optimization than {SGD},'' in
  \emph{Proc. Neural Inf. Process. Syst. (NeurIPS)}, Dec. 2018.

\bibitem{bottou2018optimization}
L.~Bottou, F.~Curtis, and J.~Nocedal, ``Optimization methods for large-scale
  machine learning,'' \emph{SIAM Review}, vol.~60, no.~2, pp. 223--311, May
  2018.

\bibitem{friedlander2012hybrid}
M.~P. Friedlander and M.~Schmidt, ``Hybrid deterministic-stochastic methods for
  data fitting,'' \emph{SIAM J. Sci. Comput.}, vol.~34, no.~3, pp.
  A1380--A1405, May 2012.

\bibitem{jorswieck2008complete}
E.~A. Jorswieck, E.~G. Larsson, and D.~Danev, ``Complete characterization of
  the {Pareto} boundary for the {MISO} interference channel,'' \emph{IEEE
  Trans. Signal Process.}, vol.~56, no.~10, pp. 5292--5296, Oct. 2008.

\bibitem{zhang2010cooperative}
R.~Zhang and S.~Cui, ``Cooperative interference management with {MISO}
  beamforming,'' \emph{IEEE Trans. Signal Process.}, vol.~58, no.~10, pp.
  5450--5458, Oct. 2010.

\bibitem{cvx}
M.~Grant and S.~Boyd, ``{CVX}: Matlab software for disciplined convex
  programming, version 2.1,'' \url{http://cvxr.com/cvx}, 2014.

\bibitem{polik2010interior}
I.~P{\'o}lik and T.~Terlaky, ``Interior point methods for nonlinear
  optimization,'' in \emph{Nonlinear optimization}.\hskip 1em plus 0.5em minus
  0.4em\relax Springer, Jan. 2010, pp. 215--276.

\bibitem{beko2012efficient}
M.~Beko, ``Efficient beamforming in cognitive radio multicast transmission,''
  \emph{IEEE Trans. Wireless Commun.}, vol.~11, no.~11, pp. 4108--4117, Nov.
  2012.

\bibitem{zarikoff2010coordinated}
B.~W. Zarikoff and J.~K. Cavers, ``Coordinated multi-cell systems: Carrier
  frequency offset estimation and correction,'' \emph{IEEE J. Sel. Areas
  Commun.}, vol.~28, no.~9, pp. 1490--1501, Dec. 2010.

\bibitem{tse2005fundamentals}
D.~Tse and P.~Viswanath, \emph{Fundamentals of wireless communication}.\hskip
  1em plus 0.5em minus 0.4em\relax Cambridge Univ. Press, Jul. 2005.

\bibitem{jurafsky2009speech}
D.~Jurafsky and J.~H. Martin, \emph{Speech and Language Processing (2nd
  Edition)}.\hskip 1em plus 0.5em minus 0.4em\relax Prentice Hall, May 2008.

\bibitem{lecun1998gradient}
Y.~Lecun, L.~Bottou, Y.~Bengio, and P.~Haffner, ``Gradient-based learning
  applied to document recognition,'' \emph{Proc. IEEE}, vol.~86, no.~11, pp.
  2278--2324, Nov. 1998.

\bibitem{xiao2017fashion}
\BIBentryALTinterwordspacing
H.~Xiao, K.~Rasul, and R.~Vollgraf, ``Fashion-{MNIST}: a novel image dataset
  for benchmarking machine learning algorithms,'' 2017. [Online]. Available:
  \url{http://arxiv.org/abs/1708.07747}
\BIBentrySTDinterwordspacing

\bibitem{cao2020optimized}
X.~Cao, G.~Zhu, J.~Xu, and K.~Huang, ``Optimized power control for over-the-air
  computation in fading channels,'' \emph{IEEE Trans. Wireless Commun.},
  vol.~19, no.~11, pp. 7498--7513, Nov. 2020.

\end{thebibliography}

\end{document}